\theoremstyle{plain}
\newtheorem{proposition}[theorem]{Proposition}
\newcommand{\beqa}{\begin{eqnarray*}}
\newcommand{\eeqa}{\end{eqnarray*}\par\noindent}
\renewcommand{\emph}[1]{\textbf{#1}}
\newcommand{\ket}[1]{{|} #1\rangle}
\newcommand{\ie}{i.e.~}
\newcommand{\Set}{\mathbf{Set}}
\newcommand{\vn}{\varnothing}
\newcommand{\IFF}{\; \Longleftrightarrow \;}
\newcommand{\da}{{\downarrow}}
\newcommand{\EE}{\mathcal{E}}
\newcommand{\op}{\mathsf{op}}
\newcommand{\supp}{\mathsf{supp}}
\newcommand{\ua}{{\uparrow}}
\newcommand{\ZZ}{\mathbb{Z}}
\newcommand{\rmap}[2]{\rho^{#1}_{#2}}
\newcommand{\FZ}{F_{\ZZ}}
\newcommand{\FF}{\mathcal{F}}
\newcommand{\Cech}{\u{C}ech~}
\newcommand{\Cohom}[1]{\mbox{\textit{\u{H}}$^{#1}$}(\UU, \FF)}
\newcommand{\obst}{\gamma}
\newcommand{\SQT}{\!\sqrt{2}}
\newcommand{\rsqa}{\; \leadsto \;}
\def\Mcomma{\text{ ,}}
\def\Msemicolon{\text{ ;}}
\def\Mdot{\text{ .}}
\def\ZZ{\mathbb{Z}}
\def\cat#1{\mathbf{\mathsf{#1}}}
\renewcommand{\Set}      {\cat{Set}        }  
\newcommand{\AbGrp}    {\cat{AbGrp}      }  
\newcommand{\RMod}     {\text{$R$-}\cat{Mod}        }  
\def\op{\mathsf{op}}
\def\defeq{\colonequals} 
\def\setdef#1#2{\left\{#1 \;\middle|\; #2\right\}}             
\def\enset#1{\mathopen{ \{ }#1\mathclose{ \} }} 
\newcommand{\fdec}    [3]{#1\colon #2 \longrightarrow #3}
\newcommand{\fdef}    [3]{#1\!\coloncolon #2 \longmapsto     #3}
\newcommand{\fdecdef} [5]{#1\colon #2 \longrightarrow #3 \coloncolon #4 \longmapsto #5}
\newcommand{\natfdec} [3]{#1\colon #2 \Rightarrow #3}
\newcommand{\adjointfdec}[4]{(#1 \vdash #2)\colon \xymatrix{#3 \ar@<.5ex>[r]^{f_*} & \ar@<.5ex>[l]^{f^*} #4}}
\def\tuple#1{\mathopen{\langle}#1\mathclose{\rangle}}
\def\Forall#1{\forall{#1}\boldsymbol{.}\;}
\def\LambdaAbs#1{\lambda{#1}\boldsymbol{.}}
\def\implies{\Rightarrow}
\def\ps{\mathcal{P}}
\def\zero{\mathbf{0}}
\def\ker{\mathsf{ker}\, }
\def\coker{\mathsf{coker}\, }
\def\img{\mathsf{im}\, }
\renewcommand{\supp}{\mathsf{supp}\,}
\newcommand{\aff}{\mathsf{aff}\,}
\newcommand{\LSpan}{\mathsf{Span}\,}
\newcommand{\Aff}{\mathsf{Aff}\,}
\newcommand{\affN}{\mathsf{aff}}    
\newcommand{\lspanU}[1]{\mathsf{span}_{#1}\,}
\newcommand{\affU}[1]{\mathsf{aff}_{#1}\,}
\newcommand{\Th}{\mathbb{T}}
\newcommand{\Sol}{\mathbb{M}}
\newcommand{\Se}{\mathcal{S}}
\newcommand{\SAffe}{(\Aff \Se)}
\def\EventShf{\mathcal{E}}
\def\M{\mathcal{M}}
\def\AvN{\mathsf{AvN}}
\def\CSC{\mathsf{CSC}}
\def\CLC{\mathsf{CLC}}
\def\SC{\mathsf{SC}}
\def\LC{\mathsf{LC}}
\renewcommand{\Cech}{\v{C}ech}
\newcommand{\FR}{F_{R}}
\newcommand{\FRD}{F_{R'}}
\newcommand{\FU}{\FF_{\bar{U}}}
\def\Nerve{\mathcal{N}}
\newcommand{\bdmap}{\partial}
\newcommand{\cobd}{\delta}
\newcommand{\Ccoch}[1]{C^{#1}(\M, \FF)}
\newcommand{\Cocyc}[1]{Z^{#1}(\M, \FF)}
\newcommand{\Cobound}[1]{B^{#1}(\M, \FF)}
\renewcommand{\Cohom}[1]{\check{H}^{#1}(\M, \FF)}
\newcommand{\CcochRes}[2]{C^{#1}(\M, \FF|_{#2})}
\newcommand{\CocycRes}[2]{Z^{#1}(\M, \FF|_{#2})}
\newcommand{\CohomRes}[2]{\check{H}^{#1}(\M, \FF|_{#2})}
\newcommand{\CcochRel}[2]{C^{#1}(\M, \FF_{\tilde{#2}})}
\newcommand{\CocycRel}[2]{Z^{#1}(\M, \FF_{\tilde{#2}})}
\newcommand{\CohomRel}[2]{\check{H}^{#1}(\M, \FF_{\tilde{#2}})}
\title{Contextuality, Cohomology and Paradox}
\author[1]{Samson Abramsky}
\author[1]{Rui Soares Barbosa}
\author[1]{Kohei Kishida}
\author[1,2]{Raymond Lal}
\author[1]{Shane Mansfield}
\affil[1]{Department of Computer Science, University of Oxford, Oxford, U.K.\\
\textup{\texttt{\{samson.abramsky\,|\,rui.soares.barbosa\,|\,kohei.kishida\,|\,shane.mansfield\}@cs.ox.ac.uk}}}
\affil[2]{Faculty of Philosophy, University of Cambridge \\
\textup{\texttt{rl335@cam.ac.uk}}}
\authorrunning{J.\,Q. Open and J.\,R. Access} 
\subjclass{ddd}
\keywords{Quantum mechanics, contextuality, sheaf theory, cohomology, logical paradoxes}
\begin{document}

\maketitle

\begin{abstract}
Contextuality is a key feature of quantum mechanics that provides an important  non-classical resource for quantum information and computation.
Abramsky and Brandenburger %
used sheaf theory to give a general treatment of contextuality in quantum theory
[New Journal of Physics 13 (2011) 113036].
However, contextual phenomena are found in other fields as well, for example database theory.
In this paper, we shall develop this unified view of contextuality.
We provide two main contributions: first, we expose a remarkable connection between contexuality and logical paradoxes; secondly, we show that an important class of contextuality arguments has a topological origin.
More specifically, we show that ``All-vs-Nothing'' proofs of contextuality are witnessed by cohomological obstructions.
\end{abstract}

\section{Introduction}\label{sec:intro}

Contextuality is one of the key characteristic features of quantum mechanics. It has been argued that it provides the ``magic'' ingredient enabling quantum computation \cite{howard:14}. There have been a number of recent experimental verifications that Nature does indeed exhibit this highly non-classical form of behaviour \cite{zu:12,zhang:13}.

The study of quantum contextuality has largely been carried out in a concrete, example-driven fashion, which makes it appear highly specific to quantum mechanics. Recent work by the present authors \cite{abramsky:11,abramsky:11a} and others \cite{cabello2010non} has exposed the general mathematical structure of contextuality, enabling more general and systematic results. It has also made apparent that contextuality is a general and indeed pervasive phenomenon, which can be found in many areas of \emph{classical} computation, such as databases \cite{abramsky:12c} and constraints \cite{abramsky:13b}. The work in \cite{abramsky:11} makes extensive use of methods developed within the logic and semantics of computation.

The key idea from \cite{abramsky:11} is to understand contextuality as arising where we have a family of data which is \emph{locally consistent, but globally inconsistent}.
This can be understood, and very effectively visualised (see Fig.~\ref{fig:bundle}) in topological terms: we have a base space of \emph{contexts} (typically sets of variables which can be jointly measured or observed), a space of data or observations fibred over this space, and a family of \emph{local sections} (typically valuations of the variables in the context) in these fibres. This data is consistent locally, but not globally: there is no \emph{global section} defined on all the variables which reconciles all the local data. In topological language, we can say that the space is ``twisted'', and hence provides an \emph{obstruction} to forming a global section.

This provides a unifying description of a number of phenomena which at first sight seem very different:
\begin{itemize}
\item \textit{Quantum contextuality.} The local data arises from performing measurements on compatible sets of observables.
The fact that there is no global section corresponds to a no-go result for a hidden-variable theory to explain the observable data.
\item \textit{Databases.} The local data are the relation tables of the database. The fact that there is no global section corresponds to the failure in general of the \emph{universal relation assumption} \cite{fagin:82,maier:84}.
\item \textit{Constraint satisfaction.} The local data corresponds to the constraints, defined on subsets of the variables. The fact that there is no global section corresponds to the non-existence of a solution for the CSP.
\end{itemize}

In the present paper, we shall develop this unified viewpoint to give a logical perspective on contextuality. In particular, we shall look at contextuality in relation to \emph{logical paradoxes}:
\begin{itemize}
\item We find a direct connection between the structure of quantum contextuality and classic semantic paradoxes such as ``Liar cycles'' \cite{cook:04,walicki:09}.
\item Conversely, contextuality offers a novel perspective on these paradoxes. Contradictory cycles of references give rise to exactly the form of local consistency and global inconsistency we find in contextuality.
\end{itemize}

\subsection*{Mathematical structure}
Sheaf theory \cite{maclane:92} provides the natural mathematical setting for our analysis, since it is directly concerned with the passage from local to global.
In this setting, it is furthermore natural to use \emph{sheaf cohomology} to characterise contextuality. Cohomology is one of the major tools of modern mathematics, which has until now largely been conspicuous by its  absence, both in theoretical computer science, and in quantum information.
The use of cohomology to characterise contextuality was initiated in \cite{abramsky:11a}. In the present paper, we take the cohomological approach considerably further, taking advantage of situations in which the outcomes of observations have an algebraic structure. This applies, for example, in the case of the standard Pauli spin observables, which have eigenvalues in $\ZZ_{2}$.

We study a strong form of contextuality arising from so-called ``All-vs-Nothing'' arguments \cite{mermin:90b}. We give a much more general formulation of such arguments than has appeared previously, in terms of local consistency and global inconsistency of systems of linear equations. We also show how an extensive class of examples of such arguments arises in the stabiliser fragment of quantum mechanics, which plays an important r\^ole in quantum error correction \cite{nielsen:00} and measurement-based quantum computation \cite{raussendorf:01}.

We then show how all such All-vs-Nothing arguments are witnessed by the cohomological obstruction to the extension of local sections to global ones previously studied in \cite{abramsky:11a}. This obstruction is characterised in more abstract terms than previously, using the connecting homomorphism of the long exact sequence. Our main theorem establishes a hierarchy of properties of probability models, relating their algebraic, logical and topological structures. 

For further details and development of the ideas, see the full version of the paper \cite{abramsky2015contextuality}.


\section{The many faces of contextuality}\label{sec:manyfaces}

We begin with the following scenario, depicted in Fig.~\ref{fig:bitreg} (a).
Alice and Bob are agents positioned at nodes of a network. Alice can access local bit registers $a_1$ and $a_2$, while Bob can access local bit registers $b_1$, $b_2$. Alice can load one of her bit registers into a processing unit, and test whether it is $0$ or $1$. Bob can perform the same operations with respect to his bit registers.
They send the outcomes of these operations to a common target, which keeps a record of the joint outcomes.

\begin{figure}
\begin{center}
\begin{tabular}{cc}
\begin{tikzpicture}[scale=2.54]
\ifx\dpiclw\undefined\newdimen\dpiclw\fi
\global\def\dpicdraw{\draw[line width=\dpiclw]}
\global\def\dpicstop{;}
\dpiclw=0.8bp
\dpicdraw[fill=green!15](0,-0.1125) rectangle (0.225,0.1125)\dpicstop
\draw (0.1125,0) node{$0/1$};
\dpicdraw[fill=blue!15](-0.225,-0.5625) rectangle (0,-0.3375)\dpicstop
\draw (-0.1125,-0.45) node{$a_1$};
\dpicdraw[fill=blue!15](0.225,-0.5625) rectangle (0.45,-0.3375)\dpicstop
\draw (0.3375,-0.45) node{$a_2$};
\dpicdraw (-0.1125,-0.3375)
 --(-0.1125,-0.225)
 --(0.3375,-0.225)
 --(0.3375,-0.3375)\dpicstop
\dpicdraw (0.1125,-0.1125)
 --(0.1125,-0.225)\dpicstop
\draw (0,0) node[left=-0.3375bp]{Alice};
\dpicdraw[fill=green!15](1.125,-0.1125) rectangle (1.35,0.1125)\dpicstop
\draw (1.2375,0) node{$0/1$};
\dpicdraw[fill=blue!15](0.9,-0.5625) rectangle (1.125,-0.3375)\dpicstop
\draw (1.0125,-0.45) node{$b_1$};
\dpicdraw[fill=blue!15](1.35,-0.5625) rectangle (1.575,-0.3375)\dpicstop
\draw (1.4625,-0.45) node{$b_2$};
\dpicdraw (1.0125,-0.3375)
 --(1.0125,-0.225)
 --(1.4625,-0.225)
 --(1.4625,-0.3375)\dpicstop
\dpicdraw (1.2375,-0.1125)
 --(1.2375,-0.225)\dpicstop
\draw (1.35,0) node[right=-0.3375bp]{Bob};
\draw (0.675,1.1475) node[above=-0.3375bp]{Target};
\dpicdraw[dashed](0.1125,0.1125)
 --(0.63745,1.069635)\dpicstop
\filldraw[line width=0bp](0.657178,1.058815)
 --(0.65909,1.10909)
 --(0.617723,1.080455) --cycle
\dpicstop
\draw (0.385123,0.60957) node[left=-0.3375bp]{$a_2 = 1$};
\dpicdraw[dashed](1.2375,0.1125)
 --(0.71255,1.069635)\dpicstop
\filldraw[line width=0bp](0.732277,1.080455)
 --(0.69091,1.10909)
 --(0.692822,1.058815) --cycle
\dpicstop
\draw (0.964877,0.60957) node[right=-0.3375bp]{$b_1 = 0$};
\end{tikzpicture}
\quad & \quad
\begin{tikzpicture}[scale=2.54]
\ifx\dpiclw\undefined\newdimen\dpiclw\fi
\global\def\dpicdraw{\draw[line width=\dpiclw]}
\global\def\dpicstop{;}
\dpiclw=0.8bp
\dpicdraw[fill=blue!15](-0.171319,-0.475887) rectangle (0.019035,-0.285532)\dpicstop
\draw (-0.076142,-0.380709) node{$a_1$};
\dpicdraw[fill=blue!15](0.20939,-0.475887) rectangle (0.399745,-0.285532)\dpicstop
\draw (0.304567,-0.380709) node{$a_2$};
\dpicdraw[fill=blue!15](0.818525,-0.475887) rectangle (1.00888,-0.285532)\dpicstop
\draw (0.913702,-0.380709) node{$b_1$};
\dpicdraw[fill=blue!15](1.199234,-0.475887) rectangle (1.389589,-0.285532)\dpicstop
\draw (1.294412,-0.380709) node{$b_2$};
\dpicdraw[fill=lightgray](0.41878,-1.541873) rectangle (0.513958,-1.351518)\dpicstop
\draw (0.466369,-1.446695) node{$0$};
\dpicdraw[fill=lightgray](0.513958,-1.541873) rectangle (0.609135,-1.351518)\dpicstop
\draw (0.561546,-1.446695) node{$1$};
\dpicdraw[fill=lightgray](0.609135,-1.541873) rectangle (0.704312,-1.351518)\dpicstop
\draw (0.656723,-1.446695) node{$0$};
\dpicdraw[fill=lightgray](0.704312,-1.541873) rectangle (0.799489,-1.351518)\dpicstop
\draw (0.751901,-1.446695) node{$1$};
\dpicdraw[fill=lightgray](0.41878,-1.922582) rectangle (0.799489,-1.541873)\dpicstop
\draw (0.609135,-1.732227) node{$\vdots$};
\dpicdraw[dashed](0.466369,-1.351518)
 --(-0.056091,-0.508249)\dpicstop
\filldraw[line width=0bp](-0.039909,-0.498224)
 --(-0.076142,-0.475887)
 --(-0.072272,-0.518275) --cycle
\dpicstop
\dpicdraw[dashed](0.561546,-1.351518)
 --(0.315288,-0.512417)\dpicstop
\filldraw[line width=0bp](0.333553,-0.507056)
 --(0.304567,-0.475887)
 --(0.297023,-0.517777) --cycle
\dpicstop
\dpicdraw[dashed](0.656723,-1.351518)
 --(0.902981,-0.512417)\dpicstop
\filldraw[line width=0bp](0.921247,-0.517777)
 --(0.913702,-0.475887)
 --(0.884716,-0.507056) --cycle
\dpicstop
\dpicdraw[dashed](0.751901,-1.351518)
 --(1.274361,-0.508249)\dpicstop
\filldraw[line width=0bp](1.290542,-0.518275)
 --(1.294412,-0.475887)
 --(1.258179,-0.498224) --cycle
\dpicstop
\draw (0.609135,-1.922582) node[below=-0.285532bp]{Source};
\end{tikzpicture} \\
(a) \quad & \quad (b)
\end{tabular}
\end{center}
\caption{(a) Alice and Bob look at bits. (b) A source.}
\label{fig:bitreg}
\end{figure}

We now suppose that Alice and Bob perform repeated rounds of these operations. On different rounds, they may make different choices of which bit registers to access, and they may observe different outcomes for a given choice of register.
The target can compile statistics for this series of data, and infer probability distributions on the outcomes.

\subsection{Logical forms of contextuality}\label{ssec:bundle}

While contextuality can exhibit itself at the level of probability distributions (see \cite{abramsky:11,abramsky2015contextuality}), here we consider a stronger form of contextuality which exhibits itself at the level of the \emph{supports} of the distributions, highlighting a direct connection with logic.

Consider the tables in Fig.~\ref{fig:HardyPR},
\begin{figure}
\begin{center}
\begin{tabular}{cc}
\begin{tabular}{ll|ccccc}
A & B & $(0, 0)$ & $(1, 0)$ & $(0, 1)$ & $(1, 1)$  &  \\ \hline
$a_1$ & $b_1$ &  $1$ &  &  &  \\
$a_1$ & $b_2$ &   $0$ &  &  &  \\
$a_2$ & $b_1$ &  $0$ &  &  & \\
$a_2$ & $b_2$ &   &  &  & $0$ \\
\end{tabular}
\quad & \quad
\begin{tabular}{ll|ccccc}
A & B & $(0, 0)$ & $(1, 0)$ & $(0, 1)$ & $(1, 1)$  &  \\ \hline
$a_1$ & $b_1$ & $1$ & $0$ & $0$ & $1$ & \\
$a_1$ & $b_2$ & $1$ & $0$ & $0$ & $1$ & \\
$a_2$ & $b_1$ & $1$ & $0$ & $0$ & $1$ & \\
$a_2$ & $b_2$ & $0$ & $1$ & $1$ & $0$ & 
\end{tabular}
\end{tabular}
\end{center}
\caption{The Hardy paradox (left) and the PR box (right).}
\label{fig:HardyPR}
\end{figure}%
which depict the kind of scenario we have been considering. The entries are either $0$ or $1$. The idea is that a $1$ entry represents a positive probability. Thus we are distinguishing only between \emph{possible} (positive probability) and \emph{impossible} (zero probability). In other words, the rows correspond to the \emph{supports} of some (otherwise unspecified) probability distributions.
Note that only four entries of the Hardy table are filled in. Our claim is that just from these entries, referring only to the supports, we can deduce that there is no classical explanation for the behaviour recorded in the table. Moreover, this behaviour can be realised in quantum mechanics \cite{hardy:93}, yielding a stronger form of Bell's theorem \cite{bell:64}, due to Hardy \cite{hardy:93}.

\subsubsection*{What do ``observables'' observe?\nopunct}
Classically, we would take the view that physical observables directly reflect properties of the physical system we are observing. These are objective properties of the system, which are independent of our choice of which measurements to perform,
\ie of our \emph{measurement context}.
More precisely, this would say that for each possible state of the system, there is a function $\lambda$ which for each measurement $m$ specifies an outcome $\lambda(m)$, \emph{independently of which other measurements may be performed}.
This point of view is called \emph{non-contextuality}, and may seem self-evident.
However, this view is \emph{impossible to sustain} in the light of our \emph{actual observations of (micro)-physical reality}.

Consider once again the Hardy table depicted in Fig.~\ref{fig:HardyPR}. Suppose there is a function $\lambda$ which accounts for the possibility of Alice observing value $0$ for $a_1$ and Bob observing $0$ for $b_1$, as asserted by the entry in the top left position in the table. Then this function $\lambda$ must satisfy
$\lambda(a_1) = 0$, $\lambda(b_1) = 0$.
Now consider the value of $\lambda$ at $b_2$. If $\lambda(b_2) = 0$, then this would imply that the event that $a_1$ has value $0$ and $b_2$ has value $0$ is possible. However, \emph{this is precluded} by the $0$ entry in the table for this event. The only other possibility is that $\lambda(b_2) = 1$. Reasoning similarly with respect to the joint values of $a_2$ and $b_2$, we conclude, using the bottom right entry in the table, that we must have $\lambda(a_2) = 0$. Thus the only possibility for $\lambda$ consistent with these entries is
$\lambda \coloncolon\, a_1 \mapsto 0, a_2 \mapsto 0, b_1 \mapsto 0, b_2 \mapsto 1$.
But this would require the outcome $(0, 0)$ for measurements $(a_2,b_1)$ to be possible, and this is \emph{precluded} by the table.

We are thus forced to conclude that the Hardy models are contextual. Moreover, we can say that they are contextual in a logical sense, stronger than the probabilistic form we saw with the Bell tables, since we only needed information about possibilities to infer the contextuality of this behaviour.

\subsubsection*{Strong contextuality}

Logical contextuality as exhibited by the Hardy paradox can be expressed in the following form: there is a local assignment (in the Hardy case, the assignment $a_1 \mapsto 0, b_1 \mapsto 0$) which is in the support, but which cannot be extended to a global assignment which is compatible with the support. This says that the support cannot be covered by the projections of global assignments. 
A stronger form of contextuality is when \emph{no global assignments are consistent with the support at all}.
Note that this stronger form does not hold for the Hardy paradox.

Several much-studied constructions from the quantum information literature exemplify strong contextuality.
An important example is the Popescu--Rohrlich (PR) box \cite{popescu:94} shown in Fig.~\ref{fig:HardyPR}.
%
%
%
This is a behaviour which satisfies the \emph{no-signalling principle} \cite{popescu:94}, meaning that the probability of Alice observing a particular outcome for her choice of measurement (e.g. $a_1=0$), is independent of whether Bob chooses measurement $b_1$ or $b_2$; and vice versa. 
That is, Alice and Bob cannot \emph{signal} to one another, enforcing compatibility with relativistic constraints.
%
In fact, there is provably no bipartite quantum-realisable behaviour of this kind which is strongly contextual \cite{lal:11,mansfield:14}. 
However, as soon as we go to three or more parties, strong contextuality does arise from entangled quantum states, as we shall see in \S\ref{sec:AvN}.

\subsubsection*{Visualizing contextuality}

The tables which have appeared in our examples can be displayed in a visually appealing way which makes the fibred topological structure apparent, and forms an intuitive bridge to the formal development of the sheaf-theoretic ideas in the next section.

First, we look at the Hardy table from Fig.~\ref{fig:HardyPR}, displayed as a ``bundle diagram'' on the left of Fig.~\ref{fig:bundle}.
Note that all unspecified entries of the Hardy table are set to $1$.

\begin{figure}
\begin{center}
\begin{tikzpicture}[x=45pt,y=45pt,thick,label distance=-0.25em,baseline=(O.base)]
\coordinate (O) at (0,0);
\coordinate (T) at (0,1.5);
\coordinate (u) at (0,0.5);
\coordinate [inner sep=0em] (v0) at ($ ({-cos(1*pi/12 r)*1.2},{-sin(1*pi/12 r)*0.48}) $);
\coordinate [inner sep=0em] (v1) at ($ ({-cos(7*pi/12 r)*1.2},{-sin(7*pi/12 r)*0.48}) $);
\coordinate [inner sep=0em] (v2) at ($ ({-cos(13*pi/12 r)*1.2},{-sin(13*pi/12 r)*0.48}) $);
\coordinate [inner sep=0em] (v3) at ($ ({-cos(19*pi/12 r)*1.2},{-sin(19*pi/12 r)*0.48}) $);
\coordinate [inner sep=0em] (v0-1) at ($ (v0) + (T) $);
\coordinate [inner sep=0em] (v0-0) at ($ (v0-1) + (u) $);
\coordinate [inner sep=0em] (v1-1) at ($ (v1) + (T) $);
\coordinate [inner sep=0em] (v1-0) at ($ (v1-1) + (u) $);
\coordinate [inner sep=0em] (v2-1) at ($ (v2) + (T) $);
\coordinate [inner sep=0em] (v2-0) at ($ (v2-1) + (u) $);
\coordinate [inner sep=0em] (v3-1) at ($ (v3) + (T) $);
\coordinate [inner sep=0em] (v3-0) at ($ (v3-1) + (u) $);
\draw (v0) -- (v1) -- (v2) -- (v3) -- (v0);
\draw [dotted] (v0-0) -- (v0);
\draw [dotted] (v1-0) -- (v1);
\draw [dotted] (v2-0) -- (v2);
\draw [dotted] (v3-0) -- (v3);
\node [inner sep=0.1em] (v0') at (v0) {$\bullet$};
\node [anchor=east,inner sep=0em] at (v0'.west) {$a_1$};
\node [inner sep=0.1em,label={[label distance=-0.625em]330:{$b_1$}}] at (v1) {$\bullet$};
\node [inner sep=0.1em] (v2') at (v2) {$\bullet$};
\node [anchor=west,inner sep=0em] at (v2'.east) {$a_2$};
\node [inner sep=0.1em,label={[label distance=-0.5em]175:{$b_2$}}] at (v3) {$\bullet$};
\draw [line width=3.2pt,white] (v0-0) -- (v3-1);
\draw [line width=3.2pt,white] (v0-1) -- (v3-0);
\draw [line width=3.2pt,white] (v0-1) -- (v3-1);
\draw (v0-0) -- (v3-1);
\draw [blue] (v0-1) -- (v3-0);
\draw (v0-1) -- (v3-1);
\draw [line width=3.2pt,white] (v2-0) -- (v3-0);
\draw [line width=3.2pt,white] (v2-0) -- (v3-1);
\draw [line width=3.2pt,white] (v2-1) -- (v3-0);
\draw [blue] (v2-0) -- (v3-0);
\draw (v2-0) -- (v3-1);
\draw (v2-1) -- (v3-0);
\draw [line width=3.2pt,white] (v0-0) -- (v1-0);
\draw [line width=3.2pt,white] (v0-0) -- (v1-1);
\draw [line width=3.2pt,white] (v0-1) -- (v1-0);
\draw [line width=3.2pt,white] (v0-1) -- (v1-1);
\draw [red] (v0-0) -- (v1-0);
\draw (v0-0) -- (v1-1);
\draw (v0-1) -- (v1-0);
\draw [blue] (v0-1) -- (v1-1);
\draw [line width=3.2pt,white] (v2-0) -- (v1-1);
\draw [line width=3.2pt,white] (v2-1) -- (v1-0);
\draw [line width=3.2pt,white] (v2-1) -- (v1-1);
\draw [blue] (v2-0) -- (v1-1);
\draw (v2-1) -- (v1-0);
\draw (v2-1) -- (v1-1);
\node [inner sep=0.1em,label=left:{$0$}] at (v0-0) {$\bullet$};
\node [inner sep=0.1em,label=left:{$1$}] at (v0-1) {$\bullet$};
\node [inner sep=0.1em] at (v1-0) {$\bullet$};
\node [inner sep=0.1em,label={[label distance=-0.5em]330:{$1$}}] at (v1-1) {$\bullet$};
\node [inner sep=0.1em,label=right:{$0$}] at (v2-0) {$\bullet$};
\node [inner sep=0.1em,label=right:{$1$}] at (v2-1) {$\bullet$};
\node [inner sep=0.1em,label={[label distance=-0.5em]150:{$0$}}] at (v3-0) {$\bullet$};
\node [inner sep=0.1em] at (v3-1) {$\bullet$};
\end{tikzpicture}
\hfil%
\begin{tikzpicture}[x=45pt,y=45pt,thick,label distance=-0.25em,baseline=(O.base)]
\coordinate (O) at (0,0);
\coordinate (T) at (0,1.5);
\coordinate (u) at (0,0.5);
\coordinate [inner sep=0em] (v0) at ($ ({-cos(1*pi/12 r)*1.2},{-sin(1*pi/12 r)*0.48}) $);
\coordinate [inner sep=0em] (v1) at ($ ({-cos(7*pi/12 r)*1.2},{-sin(7*pi/12 r)*0.48}) $);
\coordinate [inner sep=0em] (v2) at ($ ({-cos(13*pi/12 r)*1.2},{-sin(13*pi/12 r)*0.48}) $);
\coordinate [inner sep=0em] (v3) at ($ ({-cos(19*pi/12 r)*1.2},{-sin(19*pi/12 r)*0.48}) $);
\coordinate [inner sep=0em] (v0-1) at ($ (v0) + (T) $);
\coordinate [inner sep=0em] (v0-0) at ($ (v0-1) + (u) $);
\coordinate [inner sep=0em] (v1-1) at ($ (v1) + (T) $);
\coordinate [inner sep=0em] (v1-0) at ($ (v1-1) + (u) $);
\coordinate [inner sep=0em] (v2-1) at ($ (v2) + (T) $);
\coordinate [inner sep=0em] (v2-0) at ($ (v2-1) + (u) $);
\coordinate [inner sep=0em] (v3-1) at ($ (v3) + (T) $);
\coordinate [inner sep=0em] (v3-0) at ($ (v3-1) + (u) $);
\draw (v0) -- (v1) -- (v2) -- (v3) -- (v0);
\draw [dotted] (v0-0) -- (v0);
\draw [dotted] (v1-0) -- (v1);
\draw [dotted] (v2-0) -- (v2);
\draw [dotted] (v3-0) -- (v3);
\node [inner sep=0.1em] (v0') at (v0) {$\bullet$};
\node [anchor=east,inner sep=0em] at (v0'.west) {$a_1$};
\node [inner sep=0.1em,label={[label distance=-0.625em]330:{$b_1$}}] at (v1) {$\bullet$};
\node [inner sep=0.1em] (v2') at (v2) {$\bullet$};
\node [anchor=west,inner sep=0em] at (v2'.east) {$a_2$};
\node [inner sep=0.1em,label={[label distance=-0.5em]175:{$b_2$}}] at (v3) {$\bullet$};
\draw [line width=3.2pt,white] (v0-0) -- (v3-0);
\draw [line width=3.2pt,white] (v0-1) -- (v3-1);
\draw (v0-0) -- (v3-0);
\draw (v0-1) -- (v3-1);
\draw [line width=3.2pt,white] (v2-0) -- (v3-1);
\draw [line width=3.2pt,white] (v2-1) -- (v3-0);
\draw (v2-0) -- (v3-1);
\draw (v2-1) -- (v3-0);
\draw [line width=3.2pt,white] (v0-0) -- (v1-0);
\draw [line width=3.2pt,white] (v0-1) -- (v1-1);
\draw (v0-0) -- (v1-0);
\draw (v0-1) -- (v1-1);
\draw [line width=3.2pt,white] (v2-0) -- (v1-0);
\draw [line width=3.2pt,white] (v2-1) -- (v1-1);
\draw (v2-0) -- (v1-0);
\draw (v2-1) -- (v1-1);
\node [inner sep=0.1em,label=left:{$0$}] at (v0-0) {$\bullet$};
\node [inner sep=0.1em,label=left:{$1$}] at (v0-1) {$\bullet$};
\node [inner sep=0.1em] at (v1-0) {$\bullet$};
\node [inner sep=0.1em,label={[label distance=-0.5em]330:{$1$}}] at (v1-1) {$\bullet$};
\node [inner sep=0.1em,label=right:{$0$}] at (v2-0) {$\bullet$};
\node [inner sep=0.1em,label=right:{$1$}] at (v2-1) {$\bullet$};
\node [inner sep=0.1em,label={[label distance=-0.5em]150:{$0$}}] at (v3-0) {$\bullet$};
\node [inner sep=0.1em] at (v3-1) {$\bullet$};
\end{tikzpicture}
\end{center}
\caption{The Hardy table and the PR box as bundles.}
\label{fig:bundle}
\end{figure}

What we see in this representation is the \emph{base space} of the variables $a_1$, $a_2$, $b_1$, $b_2$. There is an edge between two variables when they can be measured together. The pairs of co-measurable variables correspond to the rows of the table. In terms of quantum theory, these correspond to pairs of \emph{compatible observables}. Above each vertex  is a \emph{fibre} of those values which can be assigned to the variable---in this example, $0$ and $1$ in each fibre. There is an edge between values in adjacent fibres precisely when the corresponding \emph{joint outcome} is possible, \ie has a $1$ entry in the table. Thus there are three edges for each of the pairs $\{ a_1, b_2 \}$, $\{ a_2, b_1 \}$ and $\{ a_2, b_2 \}$.

A \emph{global assignment} corresponds to a closed path traversing all the fibres exactly once. We call such a path \emph{univocal} since it assigns a unique value to each variable. Note that there is such a path, marked in blue; thus the Hardy model is not strongly contextual. However, there is no such path which includes the edge displayed in red. This shows the logical contextuality of the model.

Next, we consider the PR box displayed as a bundle on the right of Fig.~\ref{fig:bundle}.
In this case, the model is strongly contextual, and accordingly there is no univocal closed path.

\subsubsection*{Contextuality, logic and paradoxes}

The arguments for quantum contextuality we have discussed may be said to skirt the borders of paradox, but they do not cross those borders. The information we can gather from observing the co-measurable variables is locally consistent, but it cannot in general be pieced together into a globally consistent assignment of values to all the variables simultaneously. Thus we must give up the idea that physically observable variables have objective, ``real''  values independent of the measurement context being considered. This is very disturbing for our understanding of the nature of physical reality, but there is no direct contradiction between logic and experience. 
We shall now show that a similar analysis can be applied to some of the fundamental logical paradoxes.

A \emph{Liar cycle} of length $N$ is a sequence of statements of the following kind.
\[
S_1 : S_2 \text{ is true,} \quad
S_2 : S_3 \text{ is true,} \quad
\dots \quad, \quad
S_{N-1} : S_N \text{ is true,} \quad
S_N : S_1 \text{ is false.}
\]
For $N=1$, this is the classic Liar sentence $S :  S \text{ is false}$.
These sentences contain two features which go beyond standard logic: references to other sentences, and a truth predicate.
While it would be possible to make a more refined analysis directly modelling these features, we will not pursue this here, noting that it has been argued extensively and rather compellingly in much of the recent literature on the paradoxes that the essential content is preserved by replacing statements with these features by \emph{boolean equations} \cite{wen:01,cook:04,walicki:09}. For the Liar cycles, we introduce boolean variables $x_1, \ldots , x_n$, and consider the equations
$x_1 = x_2$, \ldots , $x_{n-1} = x_n$, $x_n = \neg x_1$.
The ``paradoxical'' nature of the original statements is now captured by the inconsistency of these equations.

Note that we can regard each of these equations as fibered over the set of variables which occur in it:
\[
\{ x_1, x_2 \} :  x_1  {} = {}  x_2, \quad
\dots \quad, \quad
\{ x_{n-1}, x_n \} :  x_{n-1}  {} = {}  x_n, \quad
\{ x_n , x_1 \} :  x_n  {} = {}  \neg x_1 .
\]
Any subset of  up to $n-1$ of these equations is consistent; while the whole set is inconsistent.

Up to rearrangement, the Liar cycle of length 4 corresponds exactly to the PR box. The usual reasoning to derive a contradiction from the Liar cycle corresponds precisely to the attempt to find a univocal path in the bundle diagram on the right of Fig.~\ref{fig:bundle}.
To relate the notations, we make the following correspondences between the variables of Fig.~\ref{fig:bundle} and those of the boolean equations:
$x_1 \sim a_2$, $x_2 \sim b_1$, $x_3 \sim a_1$, $x_4 \sim b_2$.
Thus we can read the equation $x_1 = x_2$ as ``$a_2$ is correlated with $b_1$'', and $x_4 = \neg x_1$ as ``$a_2$ is anti-correlated with $b_2$''.

Now suppose that we try to set $a_2$ to $1$. Following the path in Fig.~\ref{fig:bundle} on the right leads to the following local propagation of values:
\begin{gather*}
a_2 = 1 \rsqa b_1 = 1 \rsqa a_1 = 1 \rsqa b_2 = 1 \rsqa a_2 = 0 \\
a_2 = 0 \rsqa b_1 = 0 \rsqa a_1 = 0 \rsqa b_2 = 0 \rsqa a_2 = 1
\end{gather*}
The first half of the path corresponds to the usual derivation of a contradiction from the assumption that $S_1$ is true, and the second half to deriving a contradiction from the assumption that $S_1$ is false.

We have discussed a specific case here, but the analysis can be generalised to a large class of examples along the lines of \cite{cook:04,walicki:09}. The tools from sheaf cohomology which we will develop in the remainder of the paper can be applied to these examples. We plan to give an extended treatment of these ideas in future work.

\section{Sheaf formulation of contextuality}\label{sec:sheafframework}

In this section we summarise the main ideas of the sheaf-theoretic formalism from \cite{abramsky:11}.
In \S\ref{ssec:bundle} we saw that logical contextuality can be expressed in terms of a bundle of outcomes over a base space of measurements and contexts.
This idea can be formalized by regarding the bundle as a  sheaf.

For our purposes, it will be sufficient to view the base space as the discrete space on a finite set $X$ of variables.\footnote{The fact that our examples involve a discrete base space $X$ does not trivialise our approach, and certainly does not mean that we are taking the cohomology of a discrete space!  It is standard that in a topological bundle, the interesting twisting occurs in the fibres, not in the base. A classic example is the M\"obius strip, displayed as a fibre bundle over the circle. The circle is not twisted! In our case, it is clear from our examples that non-trivial twisting does occur. Moreover, our results in Section~\ref{sec:cohomAvN} will clearly show the non-triviality of our cohomological obstructions.}
In the quantum case, these variables will be labels for measurements.
The measurement contexts will be represented by a family $\M = \{ C_i \}_{i \in I}$ of subsets of $X$. These are the sets of variables which can be measured together---%
in quantum terms, the compatible families of observables.
We assume that $\M$ covers $X$, \ie $\bigcup \M = X$;
hence we call $\M$ a \emph{measurement cover}. We shall also assume that $\M$ forms an antichain, so these are the \emph{maximal contexts}.
We  also assume that all the variables have the same fibre, $O$, of values or outcomes that can be assigned to them.
Such a triple $\tuple{X, \M, O}$ is called a \emph{measurement scenario}.
We define a presheaf of sets over $\mathcal{P}(X)$, namely $\fdef{\EE}{U}{O^U}$ with restriction
$\fdecdef{\EE(U \subseteq U')}{\EE(U')}{\EE(U)}{s}{s |_U}$.
This presheaf $\EE$ is in fact a sheaf, called the \emph{sheaf of events}.
Each $s \in \EE(U)$ is a \emph{section}, and, in particular, $g \in \EE(X)$ is a \emph{global section}. 

Note that a probability table can be represented by a family $\{ p_C \}_{C \in \M}$ with $p_C$ a probability distribution on $\EE(C) = O^{C}$,
where contexts $C$ correspond to the rows of the table. 
Similarly,  ``possibility tables'' such as the Hardy model and the PR box (Figs.~\ref{fig:HardyPR} and \ref{fig:bundle}) can be represented by boolean distributions. 
This latter case,
with which the logical and strong forms of contextuality are concerned,
can equivalently be represented by a subpresheaf $\Se$ of $\EE$, where
for each context $U \subseteq X$, $\Se(U) \subseteq O^U$ is the set of all possible outcomes.
Explicitly, $\Se$ is defined as follows,
where $\supp(p_C |_{U \cap C})$ is the support of the marginal of $p_C$ at $U \cap C$.
\[ \Se(U)  \defeq  \setdef{s \in O^U}{\Forall{C \in \M} s |_{U \cap C} \in \supp(p_C |_{U \cap C})} \]

\noindent Abstracting from this situation,
we assume we are dealing with a sub-presheaf $\Se$ of $\EE$ with the following properties:
\begin{enumerate}
\def\theenumi{E\arabic{enumi}}
\item\label{def:empirical.model.1}
$\Se(C) \neq \varnothing$ for all $C \in \M$
\end{enumerate}
(\ie that any possible joint measurement yields some joint outcome), and moreover that
\begin{enumerate}
\def\theenumi{E\arabic{enumi}}
\addtocounter{enumi}{1}
\item\label{def:empirical.model.2}
  $\Se$ is \emph{flasque beneath the cover}, meaning that $\fdec{\Se(U \subseteq U')}{\Se(U')}{\Se(U)}$ is surjective whenever
$U \subseteq U' \subseteq C$ for some $C \in \M$,
\end{enumerate}
which by \cite{abramsky:11} amounts to saying that the underlying empirical model satisfies no-signalling.
\begin{enumerate}
\def\theenumi{E\arabic{enumi}}
\addtocounter{enumi}{2}
\item\label{def:empirical.model.3}
A \emph{compatible family} for the cover $\M$ is a family $\{ s_C \}_{C \in \M}$ with  $s_C \in \Se(C)$, and such that,
for all $C, C' \in \M$: $s_C |_{C \cap C'} = s_{C'} |_{C \cap C'}$.
We assume that such a family induces a global section in $\Se(X)$. (This global section must be unique, since $\Se$ is a subpresheaf of $\EE$, hence separated).
\end{enumerate}

\noindent
What these conditions say is that $\Se$ is determined by its values $\Se(C)$ at the contexts $C \in \M$, below $\M$ by being flasque, and above $\M$ by the sheaf condition.

\begin{definition}
By an \emph{empirical model} on $\tuple{X, \M, O}$, we mean a subpresheaf $\Se$ of $\EE$ satisfying \eqref{def:empirical.model.1}, \eqref{def:empirical.model.2}, and \eqref{def:empirical.model.3}.
\end{definition}

In \cite{abramsky:11}, we used the term ``empirical model'' for the probability table $\{ p_C \}_{C \in \M}$. In the present paper, we shall only work with the associated support presheaf $\Se$, and so it is more convenient to refer to this as the model.

We can use this formalisation to characterize contextuality as follows.

\begin{definition}
For any empirical model $\Se$:
\begin{itemize}
  \item
For $C \in \M$ and $s \in \Se(C)$, $\Se$ is logically contextual at $s$, written $\LC(\Se,s)$, if $s$ belongs to no compatible family.
$\Se$ is \emph{logically contextual}, written $\LC(\Se)$, if $\LC(\Se,s)$ for some $s$.
  \item $\Se$ is \emph{strongly contextual}, written $\SC(\Se)$, if $\LC(\Se,s)$ for all $s$. Equivalently, it is strongly contextual if it has no global section, \ie if $\Se(X) = \varnothing$.
\end{itemize}
\end{definition}

Note that, for every probability table $\{ p_C \}_{C \in \M}$ that satisfies the no-signalling principle,  the supports of the distributions $p_C$ induce an empirical model $\Se$, and therefore  logical or strong contextuality can be characterized as above.
This formulation of contextuality makes it natural to use sheaf cohomology, as we will see in \S\ref{sec:cohom}.

\section{All-vs-Nothing arguments}\label{sec:AvN}
Quantum theory provides many instances of strong contextuality.
Among the first to observe quantum strong contextuality (though not in the general terms that we do) was Mermin \cite{mermin:90},
who showed the GHZ state to be strongly contextual
using a kind of argument he dubbed `all versus nothing'.
We show in \S\ref{ssec:AvN-quantum}
that arguments of this type can in fact be used to prove strong contextuality
for a large class of states in quantum theory,
particularly in stabiliser quantum mechanics, which plays a crucial r\^ole in quantum computation.
Moreover, in \S\ref{ssec:AvN-defs},
we give a much more general formulation of this type of argument that can be used to show strong contextuality for a much larger class of models.

\subsection{All-vs-Nothing for quantum theory}\label{ssec:AvN-quantum}

The GHZ state is a tripartite state of qubits, defined as $(\ket{\ua\ua\ua}   + \ \ket{\da\da\da})/\SQT$.
We assume that each party $i = 1, 2, 3$ can perform Pauli measurements in $\{X_i, Y_i\}$,
and each measurement has outcomes in $O = \ZZ_2 = \{0,1\}$.%
\footnote{
Although the eigenvalues of the Pauli matrices are $+1$ and $-1$, we relabel $+1$, $-1$, $\times$ as $0$, $1$, $\oplus$, respectively. The eigenvalues of a joint measurements $A_1 \otimes A_2 \otimes A_3$ are the products of the eigenvalues of the measurements at each site, so they are also $\pm1$. As such, in the usual representation, these joint measurements are still dichotomic and only distinguish joint outcomes up to parity. Mermin's argument shows that this information is sufficient to derive strong contextuality.
}
Then, following Mermin's argument, the possible joint outcomes satisfy these parity equations:
%
%
\begin{align*}
& X_1 \oplus Y_2 \oplus Y_3 = 1 , &
& Y_1 \oplus Y_2 \oplus X_3 = 1 , &
& Y_1 \oplus X_2 \oplus Y_3 = 1 , &
& X_1 \oplus X_2 \oplus X_3 = 0 .
\end{align*}
These equations are inconsistent because,
regardless of the outcomes assigned to the observables $X_1, \dots, Y_3$,
the left-hand sides sum to $0$ (since each variable occurs twice) whereas the right-hand sides sum to $1$.
This shows that the model is strongly contextual,
as there is no global assignment of outcomes to observables consistent with the observed local assignments.

The essence of the argument is that the possible local assignments satisfy  systems of parity equations that
admit no global solution. We call this an \emph{All-vs-Nothing argument}.
In fact, such
arguments arise naturally from a much larger class of states in stabiliser quantum theory \cite{nielsen:00}.
Consider the Pauli $n$-group $\mathcal{P}_n$,
whose elements are $n$-tuples of Pauli operators (from $\{ X, Y, Z, I \})$ with a global phase from $\{ \pm 1, \pm i \}$.

\begin{definition}
 An \emph{AvN triple} in $\mathcal{P}_n$ is a triple $\langle e, f, g \rangle$ of elements of $\mathcal{P}_n$ with global phases $+1$, which pairwise commute, and which satisfy the following conditions:
\begin{enumerate}
	\def\theenumi{A\arabic{enumi}}
	\item\label{def:AvN.triple.1}
	For each $i=1, \ldots ,n$, at least two of $e_i$, $f_i$, $g_i$ are equal.
	\item\label{def:AvN.triple.2}
	The number of $i$ such that $e_i = g_i \neq f_i$, all distinct from $I$, is odd. 
\end{enumerate}
\end{definition}

Mermin's argument, and the other All-vs-Nothing arguments which have appeared in the literature, can be seen to come down to exhibiting AvN triples.

\begin{theorem}\label{thm:pauavn}
  Let $S$ be the subgroup of $\mathcal{P}_n$ generated by an AvN triple, and $V_S$ the subspace stabilised by $S$.
For every state $\ket{\psi}$ in $V_S$, the empirical model realised by $\ket{\psi}$ under the Pauli measurements
admits an All-vs-Nothing argument.
\end{theorem}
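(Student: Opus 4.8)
The plan is to exhibit, for every $\ket{\psi} \in V_S$, an inconsistent system of $\ZZ_2$-parity equations that is nonetheless satisfied by all local data of the model — which is precisely what constitutes an All-vs-Nothing argument. The starting point I would isolate as a lemma, generalising the GHZ reasoning above: if $s = s_1 \otimes \cdots \otimes s_n$ is a plain tensor product of Paulis (global phase $+1$) and $s\ket{\psi} = (-1)^c \ket{\psi}$ for some $c \in \ZZ_2$, then, since the local observables $\{s_i\}_{i \in \supp(s)}$ act on distinct tensor factors and their jointly measured outcomes multiply to the eigenvalue of $s$, every section of the model in a context containing $\supp(s)$ satisfies $\bigoplus_{i \in \supp(s)} \sigma(s_i) = c$. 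Applying this to $e$, $f$, $g$, each of which has global phase $+1$ and fixes every $\ket{\psi} \in V_S$, yields three parity equations with right-hand side $0$.

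Next I would add these three equations in $\ZZ_2$ and use condition \eqref{def:AvN.triple.1}. At each site $i$, two of $e_i, f_i, g_i$ coincide, so the corresponding variable occurs an even number of times and cancels, leaving only the ``odd one out''. A short case check (including the configurations involving $I$) shows the surviving variable is exactly $\sigma((efg)_i)$, where $(efg)_i \in \{X,Y,Z,I\}$ is the reduced Pauli type of the matrix product $e_i f_i g_i$, with no contribution when $(efg)_i = I$. Hence any global assignment $\lambda$ satisfying the three equations would satisfy $\bigoplus_{i} \lambda((efg)_i) = 0$, the sum taken over $\supp(efg)$.

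The decisive step, and the one needing the most care, is the phase bookkeeping for the group element $efg$. Writing $efg = \gamma \cdot \bigotimes_i (efg)_i$, I would compute the global phase $\gamma$ as the product of local phases $\gamma_i$ defined by $e_i f_i g_i = \gamma_i (efg)_i$. Here condition \eqref{def:AvN.triple.1} does double duty: because at least two of the three local Paulis agree, each product reduces to a single Pauli with a real coefficient (three distinct non-identity Paulis would instead give an imaginary $\pm i$), so $\gamma \in \{+1,-1\}$ and the associated parity equation is meaningful. The key computation is that for distinct non-identity Paulis $P \neq Q$ one has $PPQ = Q$ and $QPP = Q$ with phase $+1$, whereas $PQP = -Q$ (using $PQ = -QP$); consequently $\gamma_i = -1$ exactly at the sites where $e_i = g_i \neq f_i$ with all three distinct from $I$, and $\gamma_i = +1$ in every other configuration permitted by \eqref{def:AvN.triple.1}. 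By condition \eqref{def:AvN.triple.2} these sites are odd in number, so $\gamma = -1$.

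Finally I would assemble the contradiction. Since $efg$ is a product of the stabilisers $e, f, g$ it fixes $\ket{\psi}$, so $\bigotimes_i (efg)_i = \gamma^{-1} efg = -efg$ has eigenvalue $-1$ on $\ket{\psi}$; by the lemma every section then satisfies $\bigoplus_i \lambda((efg)_i) = 1$, in direct conflict with the value $0$ forced above. Thus the parity equations associated with $e, f, g$ together with their product are satisfied by all local sections yet admit no common solution — an All-vs-Nothing argument — and in particular $\Se(X) = \varnothing$, so $\SC(\Se)$. It remains only to note the non-vacuousness of the hypothesis: as $e, f, g$ pairwise commute and the phase calculation shows $-I \notin S$ (some $(efg)_i \neq I$ by \eqref{def:AvN.triple.2}), the stabilised subspace $V_S$ is nonzero, so there genuinely are states $\ket{\psi}$ to which the argument applies.
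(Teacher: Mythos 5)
Your proposal is correct and takes essentially the same route as the paper's own proof: stabiliser elements of global phase $\pm 1$ yield even/odd parity equations on the supports, conditions \eqref{def:AvN.triple.1} and \eqref{def:AvN.triple.2} force $efg = -h$ for a phase-$+1$ element $h$, and the resulting odd-parity equation for $h$ contradicts the even parity obtained by cancelling the repeated variables in each column under any global assignment. The only difference is one of explicitness: your site-by-site phase bookkeeping ($PPQ = QPP = Q$ but $PQP = -Q$, with all configurations involving $I$ giving $+1$) spells out what the paper compresses into ``by the algebra of the Pauli matrices,'' and your closing remark that $V_S \neq 0$ is a useful sanity check the paper relegates to the discussion of check vectors after the theorem.
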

\begin{proof} 
First, we recall the quantum mechanics formula for the expected value of an observable $A$ on a state $v$:
\[ \langle A \rangle_v \; = \; \langle v | A | v \rangle . \]
Note that
\[ \langle v | A | v \rangle = 1 \IFF A \ket{v} = \ket{v} . \]
Thus $A$ stabilises the state $v$ iff the expected value is $1$. Suppose that $A$ is a dichotomic observable, with eigenvalues $+1$, $-1$ (see footnote~3), and $v$ is a state it stabilises. The support of the distribution on joint outcomes obtained by measuring  $A$ on $v$ must contain only outcomes of  even parity; while if $-A$ stabilises $v$, then the support will contain only outcomes of odd parity.
If $A = P_1 \cdots P_n$ in $\mathcal{P}_n$, the former case translates into the equation
\[ x_1 \oplus \cdots \oplus x_n = 0 \]
where we associate the variable $x_i$ with $P_i$; while in the latter case, it corresponds to the equation
\[ x_1 \oplus \cdots \oplus x_n = 1 . \]
If the set of equations satisfied by a state $v$ stabilised by a subgroup $S$ of $\mathcal{P}_n$ is inconsistent, we say that $v$ \emph{admits an  All-vs-Nothing argument} with respect to the measurements $h$ with global phase $+1$ such that either $h$ or $-h$ is in $S$.

We now show that any state $v$ in the subspace $V_S$ stabilised by the subgroup $S$ generated by an AvN triple $\langle e, f, g \rangle$ admits an All-vs-Nothing argument.
First, by the algebra of the Pauli matrices, we see from \eqref{def:AvN.triple.1} that if $\{ e_i, f_i, g_i \} = \{ P, Q \}$, with at least two equal to $P$, the componentwise product $e_i f_i g_i$ will, disregarding global phase, be $Q$. By \eqref{def:AvN.triple.2}, we see that the product $efg = -h$, an element of $\mathcal{P}_n$ with global phase $-1$, which translates into a condition of odd parity on the support of any state stabilised by these operators for the measurement $h$.
On the other hand, condition \eqref{def:AvN.triple.1} implies that under any global assignment to the variables, we can cancel the repeated items in each column, and deduce an even parity for $h$.
\end{proof}

If $e,f,g$ have linearly independent check vectors, they generate a subgroup $S$ such that $V_S$ has dimension $2^{n-3}$ \cite{nielsen:00,Caves06}. Thus we obtain a large class of states admitting All-vs-Nothing arguments.

\subsection{Generalized All-vs-Nothing arguments}\label{ssec:AvN-defs}

Despite their established use in the quantum literature,
All-vs-Nothing arguments as considered above do not exist for all strongly contextual models.
However, a natural generalization applies to more models.
%
For instance, the model called `box 25' in \cite{pironio:11} admits no parity AvN argument,
but it still satisfies the following equations, in which the coefficients of each variable on the left-hand sides add up to a multiple of $3$, whereas the right-hand sides do not:
\begin{align*}
a_0 + 2 b_0 & \equiv 0 \text{ mod } 3 &
a_1 + 2 c_0 & \equiv 0 \text{ mod } 3 \\
a_0 + b_1 + c_0 & \equiv 2 \text{ mod } 3 &
a_0 + b_1 + c_1 & \equiv 2 \text{ mod } 3 \\
a_1 + b_0 + c_1 & \equiv 2 \text{ mod } 3 &
a_1 + b_1 + c_1 & \equiv 2 \text{ mod } 3
\end{align*}
%
This example suggests using a general $\ZZ_n$ instead of just $\ZZ_2$.
But once we realize that it is the ring structure of $\ZZ_n$ which plays the key r\^ole,
we can obtain an even more general version.

Fix a ring\footnote{All rings considered in this paper will be commutative and with unit.} $R$, and a measurement scenario $\tuple{X, \M, R}$.  
\begin{definition}\label{def:lineqs}
An \emph{$R$-linear equation} is a triple $\phi = \tuple{C, a, b}$ with $C \in \M$, $\fdec{a}{C}{R}$ and $b \in R$.
Write $V_{\phi} := C$. 
An assignment $s \in \EE(C)$ \emph{satisfies} $\phi$, written $s \models \phi$, if
  \[
  \sum_{m \in C} a(m) s(m) = b
  \Mdot \]
This lifts to the level of systems of equations, or theories, and sets of assignments, or ``models'':
\begin{itemize}
\item A system of equations $\Gamma$ has a set of satisfying assignments,
  $\Sol(\Gamma) \defeq \setdef{s \in \EE(C)}{\Forall{\phi \in \Gamma} s \models \phi}$.
\item A set of assignments $S \subseteq  \EE(C)$ determines an $R$-linear theory,
  $\Th_R(S) \defeq \setdef{\phi}{\Forall{s \in S} s \models \phi}$.
\end{itemize}
\end{definition}

\begin{definition}\label{def:AvN.R}
 Given an empirical model $\Se$, define its \emph{$R$-linear theory} to be
 \[\Th_R(\Se) \defeq \bigcup_{C \in \M} \Th_R(\Se(C)) = \setdef{\phi}{\Forall{s \in \Se(V_\phi)} s \models \phi}\Mdot\]
We say that $\Se$ is \emph{$\AvN_R$}, written $\AvN_R(\Se)$, if $\Th_R(\Se)$ is inconsistent, meaning that there is no global assignment $\fdec{g}{X}{R}$ such that $\Forall{\phi \in \Th_R(\Se)} g |_{V_\phi} \models \phi$.
\end{definition}

\begin{proposition}\label{prop:AvNimpliesSC}
An $\AvN_R$ model is strongly contextual.
\end{proposition}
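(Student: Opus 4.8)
The plan is to prove the contrapositive: assuming $\Se$ is \emph{not} strongly contextual, I would exhibit a witness to the consistency of $\Th_R(\Se)$, thereby showing that $\Se$ is not $\AvN_R$. Since $\SC(\Se)$ is by definition the assertion $\Se(X) = \varnothing$, its negation hands me a global section $g \in \Se(X)$. Because $\Se$ is a subpresheaf of $\EE$ and here $\EE(X) = R^X$ (the outcome fibre being the ring $R$), this $g$ is in particular a global assignment $\fdec{g}{X}{R}$ — exactly the type of object whose non-existence defines $\AvN_R$.

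The one substantive step is to check that this $g$ satisfies every equation in the theory. Fix an arbitrary $\phi = \tuple{C, a, b} \in \Th_R(\Se)$, so that $V_{\phi} = C \in \M$. Since $g$ lies in $\Se(X)$ and $\Se$ is a presheaf, its restriction $g|_{V_{\phi}}$ lands in $\Se(V_{\phi})$; this is the crucial use of the subpresheaf structure, which guarantees that the restriction of a global section is a genuine local section and not merely an element of $\EE(V_{\phi})$. By the very definition $\Th_R(\Se) = \setdef{\phi}{\Forall{s \in \Se(V_{\phi})} s \models \phi}$, every member of $\Se(V_{\phi})$ satisfies $\phi$; applying this to $s = g|_{V_{\phi}}$ gives $g|_{V_{\phi}} \models \phi$.

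As $\phi$ was arbitrary, $g$ satisfies $g|_{V_{\phi}} \models \phi$ for all $\phi \in \Th_R(\Se)$, which is precisely the condition that $\Th_R(\Se)$ be consistent. Hence $\neg\AvN_R(\Se)$, completing the contrapositive and thus the proof.

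There is essentially no obstacle here — the statement is an unwinding of the definitions — and the only point requiring care is the direction of the presheaf restriction together with the fact that $\Se$ being a \emph{subpresheaf} of $\EE$ is what makes $g|_{V_{\phi}}$ a legitimate element of $\Se(V_{\phi})$. I would additionally flag that the converse fails: a consistent $R$-linear theory does \emph{not} in general entail a global section, so a strongly contextual model need not be $\AvN_R$ (indeed the text notes that parity arguments fail for some strongly contextual models). The proposition is therefore genuinely one-directional, its force coming from the fact that $\AvN_R$ demands incompatibility already at the level of the linear-algebraic data.
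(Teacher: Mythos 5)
Your proof is correct and follows essentially the same route as the paper's own: contrapositive, taking a global section $g \in \Se(X)$, restricting via the subpresheaf structure to get $g|_{V_\phi} \in \Se(V_\phi)$, and invoking the definition of $\Th_R(\Se)$ to conclude consistency. The additional remarks on the one-directionality of the implication are accurate and consistent with the paper's discussion in \S\ref{ssec:AvN-defs}.
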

\begin{proof}
Suppose $\Se$ is not strongly contextual, \ie that there is some $g \in \Se(X)$.
Then, for each $\phi \in \Th_R(\Se)$, $g |_{V_\phi} \in \Se(V_\phi)$, hence $g |_{V_\phi} \models \phi$.
Thus, $\Th_R(\Se)$ is consistent.
\end{proof}

\subsection{Affine closures}\label{ssec:AvN-aff}
We now consider the relationship between $R$-linear theories and empirical models more closely.
%

First, we focus on a single context, or set of variables, $U \subseteq X$. 
The maps between theories and models, 
$\Th : \ps \EE(U) \mathrel{\substack{\xrightarrow{\rule{12pt}{0pt}} \\[-0.8ex] \xleftarrow{\rule{12pt}{0pt}}}} \cat{Theories} : \Sol$,
form a Galois connection,
$S \subseteq \Sol(\Gamma)$ iff $\Th(S) \supseteq \Gamma$,
corresponding to the lifting of the satisfaction relation to the powersets.

%
%

We consider the closure operator $\Sol \circ \Th$, which gives the largest set of assignments whose theory is still the same.
First, note that $\EE(U) = R^U$ is a (free) $R$-module (and, when $R$ is a field, a vector space over $R$).
Given solutions $s_1, \ldots, s_t$ to a
linear equation, an affine combination of them
 is again a solution\footnote{{Affineness} is required because the equations may be inhomogeneous.}.
In other words, the set of solutions $\Sol(\Gamma)$ to a system of equations $\Gamma$ %
is an affine submodule of $\EE(U)$.
This means that $\affN \leq \Sol \circ \Th$,
where $\aff S$ stands for the \emph{affine closure} of a set $S \subseteq \EE(U)$:
 \[
 \aff S\, \defeq\, \setdef{\sum_{i=1}^{t} c_i s_i}{s_i \in S, c_i \in R, \sum_{i=1}^{t} c_i =1} \Mdot
 \]
In the particular case of vector spaces (\ie when $R$ is a field), then $\affN = \Sol \circ \Th$;
affine subspaces are exactly the possible solution sets of a theory,
and there cannot exist two different affine subspaces with the same theory,
as may happen for general rings $R$. 

We now lift this discussion to the level of empirical models.
The natural thing to do is to take the affine closure at each context $C\in \M$. However,
one must be careful to ensure that this yields a well-defined empirical model.
First, note that the affine closure operation above is natural on $U$: it gives a natural transformation
$\fdec{\aff}{\ps \circ \EE}{\ps \circ \EE}$, meaning that
\begin{equation}\label{eq:affnatural}(\aff S)|_{U'} = \aff(S|_{U'}) \Mdot\end{equation}
This follows easily from the coordinatewise definitions of the module operations on $\EE(U)$.

\begin{definition}
  Let $\Se$ be an empirical model on the scenario $\tuple{X,\M,R}$.
  We define its \emph{affine closure}, $\Aff \Se$,
  as the empirical model
  given by $\SAffe(C) \,\defeq\, \aff (\Se(C))$ at each $C \in \M$.
 \end{definition}
The property \eqref{eq:affnatural} guarantees that $\Aff\Se$
can be consistently defined to be flasque below the cover as 
$\SAffe(U) = \aff (\Se(U))$.
This equality, however, does not hold for $U$ above the cover.
In particular, it may be that $\Se(X) = \vn$ ($\Se$ strongly contextual), but $\SAffe(X) \neq \vn$. 

Since $\Th_R(\Se)$ is given as the union of the theories at each maximal context,
the Galois connection above lifts to the level of empirical models.
One can therefore relate the notion of $\Se$ being $\AvN_R$ to the strong contextuality of the affine closure of $\Se$.

\begin{proposition}\label{prop:AvN->SCAff}
Let $\Se$ be an empirical model on $\tuple{X,\M,R}$. Then,
$\AvN(\Se) \implies \SC(\Aff \Se)$.
If $R$ is a field, the converse also holds.
\end{proposition}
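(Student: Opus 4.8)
The plan is to prove both implications by contraposition, leaning on the Galois connection given by $\Th$ and $\Sol$ together with the bound $\AffN \leq \Sol \circ \Th$ established in \eqref{eq:Aff<MT-empiricalmodel}, which becomes an equality exactly when $R$ is a field. The forward implication will use only that affine closures preserve solutions (valid over any ring), while the converse will hinge on the field hypothesis.

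For $\AvN(\Se) \implies \SC(\Aff\Se)$ I would argue contrapositively: assume $\Aff\Se$ is \emph{not} strongly contextual and manufacture a global assignment witnessing the consistency of $\Th_R(\Se)$. Let $g \in \SAffe(X)$ be a global section; for each $C \in \M$, restriction gives $g|_{C} \in \SAffe(C) = \aff(\Se(C))$, so we may write $g|_{C} = \sum_{i} c_i s_i$ with $s_i \in \Se(C)$ and $\sum_i c_i = 1$. For any $\phi = \tuple{C,a,b} \in \Th_R(\Se)$, every such $s_i \in \Se(V_{\phi}) = \Se(C)$ satisfies $\phi$, and since $\sum_i c_i = 1$ an affine combination of solutions is again a solution:
\[ \sum_{m \in C} a(m)\, g|_{C}(m) \;=\; \sum_{i} c_i \sum_{m \in C} a(m)\, s_i(m) \;=\; \Big(\sum_i c_i\Big) b \;=\; b \Mdot \]
Hence $g|_{V_{\phi}} \models \phi$ for every $\phi \in \Th_R(\Se)$, contradicting $\AvN(\Se)$. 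This half rests only on \eqref{eq:aff<TM} and needs no hypothesis on $R$.

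For the converse, suppose $R$ is a field and $\Se$ is not $\AvN$, so $\Th_R(\Se)$ is consistent: there is $g : X \to R$ with $g|_{V_{\phi}} \models \phi$ for all $\phi \in \Th_R(\Se)$. I would show $g \in \SAffe(X)$. Fix $C \in \M$. Each $\phi \in \Th_R(\Se(C))$ has $V_{\phi} = C$ and lies in $\Th_R(\Se) = \bigcup_{C' \in \M}\Th_R(\Se(C'))$, so consistency gives $g|_{C} \models \phi$; thus $g|_{C} \in \Sol(\Th_R(\Se(C)))$. The field hypothesis enters decisively here: equality in \eqref{eq:Aff<MT-empiricalmodel} yields $\Sol(\Th_R(\Se(C))) = \aff(\Se(C)) = \SAffe(C)$, so $g|_{C} \in \SAffe(C)$ for every $C \in \M$. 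Since $\Aff\Se$ is an empirical model, the compatible family $\{g|_{C}\}_{C \in \M}$ (all restrictions of a single $g$) glues by \eqref{def:empirical.model.3} to a global section, which by separatedness is $g$ itself; hence $\SAffe(X) \neq \vn$ and $\Aff\Se$ is not strongly contextual.

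The two satisfaction computations are routine; the real content, and the unique place the field hypothesis is indispensable, is the converse's step $g|_{C} \in \aff(\Se(C))$. Over a general ring one has only $\aff(\Se(C)) \subseteq \Sol(\Th_R(\Se(C)))$, so a witness to the consistency of $\Th_R(\Se)$ need not be an affine combination of genuine sections of $\Se$, and the desired global section of $\Aff\Se$ may fail to exist. This is exactly the gap flagged earlier---distinct affine submodules can share the same theory over non-field rings---so invoking the equality in \eqref{eq:Aff<MT-empiricalmodel} is the crux of the argument.
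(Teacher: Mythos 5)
Your proof is correct and follows essentially the same route as the paper's: the forward direction rests on affine combinations of solutions being solutions (the content of \eqref{eq:aff<TM}, which the paper packages as $\Th_R(\Se) = \Th_R(\Aff\Se)$ plus Proposition~\ref{prop:AvNimpliesSC}, and which you simply inline contrapositively), and the converse uses exactly the paper's key point that over a field $\Sol \circ \Th$ equals the affine closure. Your context-wise unpacking of the converse, with the explicit gluing of $\{g|_C\}_{C\in\M}$ via condition \eqref{def:empirical.model.3}, is a slightly more careful rendering of the paper's one-line claim $g \in \SAffe(X)$, but mathematically identical.
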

\begin{proof} 
From $\affN \leq \Sol \circ \Th$, $\Th_{R} (\Se) = \Th_{R} (\Aff \Se)$.
Hence, if $\Se$ is $\AvN_R$, then so is $\Aff \Se$,
implying by Proposition~\ref{prop:AvNimpliesSC} that it is strongly contextual.

For the converse in the case that $R$ is a field, suppose that $\Th_{R} (\Se)$ is consistent.
This means that there is a global assignment $\fdec{g}{X}{R}$ satisfying all the equations in $\Th_{R} (\Se)$.
But since for fields
$\Sol \Th_{R} (\Se) = \Aff \Se$,
we have that $g \in \SAffe(X)$,
hence the model $\Aff \Se$ is not strongly contextual.
%
%
\end{proof}

\section{Cohomology witnesses contextuality}\label{sec:cohom}
The logical forms of contextuality are characterised by the existence of
obstructions to the extension of local sections to global compatible families.
Thus, it seems natural to apply the tools of sheaf cohomology,
which are well-suited to identifying obstructions of this kind,
in order to provide cohomological witnesses for contextuality.
This idea was put forward in previous work by the authors \cite{abramsky:11a},
the main points of which we now summarise.
In the next section, we shall prove that such cohomological witnesses of contextuality exist for the whole class of $\AvN_R$ models.

\subsection{\Cech\ cohomology}\label{ssec:cohom-cech}
Let $X$ be
a topological space, $\M$ be an open cover of $X$, and let
$\fdec{\FF}{\mathcal{O}(X)^\op}{\AbGrp}$ be a presheaf of abelian groups on $X$.
We shall be particularly concerned with the case where
$X$ is a set of measurements,
and $\M$ is the cover of maximal contexts of a measurement scenario.

\begin{definition}
  A \emph{$q$-simplex of the nerve} of $\M$,
  is a tuple $\sigma = \tuple{C_0, \ldots, C_q}$
  of elements of $\M$ with non-empty intersection, $|\sigma| \defeq \cap_{i=0}^q C_i \neq \vn$.
  We write $\Nerve(\M)^q$ for the set of $q$-simplices.
\end{definition} 
The $0$-simplices are simply the elements of the cover $\M$,
and the $1$-simplices are the pairs $\tuple{C_i,C_j}$ of intersecting elements of the cover. 
Given a $q+1$-simplex $\sigma = \tuple{C_0, \ldots , C_{q+1}}$, we can obtain $q$-simplices
\[ \bdmap_j (\sigma) \defeq  \tuple{C_0, \ldots, \widehat{C_j}, \ldots , C_{q+1}}, \qquad 0 \leq j \leq q+1 \]
by omitting one of the elements.
Note that
$| \sigma | \; \subseteq \; | \bdmap_j(\sigma) |$,
and so the presheaf $\FF$ has a restriction map 
$\fdec{\rmap{|\bdmap_j(\sigma)|}{|\sigma|}}{\FF(|\bdmap_j(\sigma)|)}{\FF(|\sigma|)}$.

We now define the \emph{\Cech\ cochain complex}.
\begin{definition}
For each $q \geq 0$, the abelian group of \emph{$q$-cochains} is defined by:
\[ \Ccoch{q} \; \defeq \; \prod_{\sigma \in \Nerve(\M)^q} \FF( | \sigma |) \Mdot \]
The \emph{$q$-coboundary map},
$\fdec{\cobd^{q}}{\Ccoch{q}}{\Ccoch{q+1}}$,
is defined as follows:
for each $\omega = (\omega(\tau))_{\tau \in \Nerve(\M)^q} \in \Ccoch{q}$, and $\sigma \in \Nerve(\M)^{q+1}$,
\[ \cobd^{q}(\omega)(\sigma) \; \defeq \; \sum_{j = 0}^{q+1} (-1)^j \rmap{|\bdmap_j(\sigma)|}{|\sigma|}\omega(\bdmap_j \sigma) \Mdot \]
The \emph{augmented \Cech\ cochain complex} is the sequence
\[\xymatrix{\zero \ar[r] & \Ccoch{0} \ar[r]&  \Ccoch{1} \ar[r] &  \cdots} \Mdot \]
\end{definition}

\begin{proposition}
\label{boundprop}
For each $q$, $\cobd^q$ is a group homomorphism, and we have $\cobd^{q+1} \circ \cobd^q = 0$.
\end{proposition}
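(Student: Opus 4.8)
The plan is to prove the two assertions separately, each by unwinding the defining formula for $\cobd^q$ and invoking only formal properties of the presheaf $\FF$.

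For the homomorphism claim, I would note that each summand of $\cobd^q(\omega)(\sigma) = \sum_{j=0}^{q+1}(-1)^j \rmap{|\bdmap_j(\sigma)|}{|\sigma|}\omega(\bdmap_j \sigma)$ depends on $\omega$ only through the value $\omega(\bdmap_j \sigma) \in \FF(|\bdmap_j \sigma|)$. The evaluation map $\omega \mapsto \omega(\bdmap_j \sigma)$ is a projection out of the product group $\Ccoch{q}$, hence a homomorphism, and the restriction map $\rmap{|\bdmap_j(\sigma)|}{|\sigma|}$ is a homomorphism because $\FF$ takes values in $\AbGrp$. A signed finite sum of homomorphisms into the abelian group $\FF(|\sigma|)$ is again a homomorphism, so $\omega \mapsto \cobd^q(\omega)(\sigma)$ is a homomorphism for each $\sigma$; since the group structure on $\Ccoch{q+1}$ is computed componentwise, $\cobd^q$ itself is a homomorphism.

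For $\cobd^{q+1}\circ\cobd^q = 0$, I would fix a $(q+2)$-simplex $\sigma = \tuple{C_0, \ldots, C_{q+2}}$ and expand the composite into the double sum
\[ (\cobd^{q+1} \cobd^q \omega)(\sigma) = \sum_{i=0}^{q+2}\sum_{j=0}^{q+1} (-1)^{i+j}\, \rmap{|\bdmap_i \sigma|}{|\sigma|}\, \rmap{|\bdmap_j \bdmap_i \sigma|}{|\bdmap_i \sigma|}\, \omega(\bdmap_j \bdmap_i \sigma) \Mdot \]
Two ingredients collapse this to zero. First, functoriality of $\FF$: since $|\sigma| \subseteq |\bdmap_i \sigma| \subseteq |\bdmap_j \bdmap_i \sigma|$, the composite of the two restriction maps equals the single map $\rmap{|\bdmap_j \bdmap_i \sigma|}{|\sigma|}$, so the inner data of a term is determined entirely by the $q$-simplex $\bdmap_j \bdmap_i \sigma$. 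Second, the simplicial face identity $\bdmap_j \bdmap_i = \bdmap_i \bdmap_{j+1}$ for $j \geq i$ (equivalently $\bdmap_i \bdmap_j = \bdmap_{j-1}\bdmap_i$ for $i<j$): each $q$-simplex obtained by deleting two of the $C$'s, say those at original positions $a<b$, arises from exactly the two index pairs $(i,j) = (b,a)$ and $(i,j) = (a,b-1)$, which carry the signs $(-1)^{a+b}$ and $(-1)^{a+b-1} = -(-1)^{a+b}$. Pairing the terms off accordingly, each pair contributes identical values with opposite signs and cancels in $\FF(|\sigma|)$; as $\sigma$ was arbitrary, $\cobd^{q+1}\cobd^q\omega = 0$.

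The only real obstacle is the index bookkeeping in this last step: one must split the double sum along $j<i$ and $j \geq i$, use the face identity to reindex one half so that matching pairs align, and check that the overall sign genuinely flips under the shift $j \mapsto j+1$. This is the standard verification that the \Cech\ complex is a cochain complex, so beyond keeping the conventions for $\bdmap_j$ and the sign $(-1)^j$ consistent with the stated definition of $\cobd^q$, no conceptual difficulty arises.
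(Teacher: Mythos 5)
Your proof is correct: the paper states Proposition~\ref{boundprop} without proof, taking for granted the standard fact that the \Cech\ cochain complex is indeed a complex, and your verification is exactly that standard argument. Both halves check out --- the homomorphism claim via the componentwise group structure on $\Ccoch{q}$ and functoriality of $\FF$ into $\AbGrp$, and the cancellation via the face identity $\bdmap_j \bdmap_i = \bdmap_i \bdmap_{j+1}$ for $j \geq i$, which correctly pairs the index pairs $(b,a)$ and $(a,b-1)$ with opposite signs for each deleted pair of positions $a<b$.
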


\begin{definition}
  For each $q\geq 0$, we define:
  \begin{itemize}
   \item  the \emph{$q$-cocycles} $\Cocyc{q} \defeq \ker\cobd^q$;
   \item  the \emph{$q$-coboundaries} $\Cobound{q} \defeq \img\cobd^{q-1}$.
 \end{itemize}
 \end{definition}
By Proposition~\ref{boundprop},
these are subgroups of $\Ccoch{q}$ with $\Cobound{q} \subseteq \Cocyc{q}$.
\begin{definition}
$\Cohom{q}$, the \emph{$q$-th \Cech\ cohomology group}, is defined as the quotient $\Cocyc{q} / \Cobound{q}$.
\end{definition}

Note that $\Cobound{0} = \zero$, and hence $\Cohom{0} \cong \Cocyc{0}$.
A $0$-cochain $\omega$ is a family $\enset{r_C \in \FF(C)}_{C \in \M}$.
Since, for each $1$-simplex $\sigma = (C, C')$,
\[ \cobd^{0}(\omega)(\sigma) \; = \; r_{C}|_{C \cap C'} \; - \; r_{C'}|_{C \cap C'} \Mcomma \]
$\omega$ is a cocycle (\ie satisfies $\cobd^{0}(c) = 0$) if and only if $r_{C}|_{C \cap C'} = r_{C'}|_{C \cap C'}$
for all maximal contexts $C, C' \in \M$ with non-empty intersection.\footnote{
The condition for a $0$-cochain  $\omega = \{r_C\}$ to be a cocycle almost states that $r$ is a compatible family, except that it does not require compatibility
over restrictions to the empty context.
For our present purposes, we are only interested in connected covers (since one can always reduce the analysis of a scenario to its connected components), in which case the exception is irrelevant.
This is because, given any two contexts $C$ and $C'$ with empty intersection, there exists a sequence of contexts
\[C = C_0, \; C_1, \; \ldots, \; C_n = C' \]
such that $C_i \cap C_{i+1} \neq \vn$ for all $i$.
Then, for any $i$, we have 
\[r_{C_i}|_\vn = r_{C_i}|_{C_i \cap C_{i+1}}|_\vn = r_{C_{i+1}}|_{C_i \cap C_{i+1}}|_\vn = r_{C_{i+1}}|_\vn \Mdot\]
Consequently, 
$r_{C}|_{C \cap C'} = r_{C}|_\vn = r_{C'}|_\vn = r_{C'}|_{C \cap C'}$,
and so the family is compatible.
}

\subsection{Relative cohomology and obstructions}\label{ssce:cohom-relative}
In order to solve the problem of extending a local section to a global compatible family,
we need to consider the relative cohomology of $\FF$ with respect to an open subset $U \subseteq X$. We will assume that the presheaf is flasque beneath the cover (as is the case with $\Se$).

We define two auxiliary presheaves related to $\FF$.
First, $\FF |_U$ is defined by
\[ \FF |_U (V) \defeq \FF(U \cap V) \Mdot \]
There is an evident presheaf map $\fdec{p}{\FF}{\FF |_U}$ given as
\[  \fdecdef{p_V}{\FF(V)}{\FF(U \cap V)}{r}{r|_{U \cap V}} \Mdot \]
Secondly, $\FU$ is defined by $\FU(V) \defeq \ker(p_V)$. Thus, we have an exact sequence of presheaves%
\begin{equation}\label{eq:sequence}
\xymatrix{
 \zero \ar[r] & \FU \ar[r] & \FF \ar[r]^-{p} & \FF |_U 
} \Mdot
\end{equation}
The \emph{relative cohomology of $\FF$ with respect to $U$} is defined to be the cohomology of the presheaf $\FU$.

We now see how this can be used to identify \emph{cohomological obstructions} to the extension of a local section.
First, recall that in light of Proposition~\ref{boundprop}, the image of $\cobd^{0}$,
$\Cobound{1}$, is contained in $\Cocyc{1}$. Therefore, the map $\cobd^0$ can be corestricted to a map
$\fdec{\tilde{\cobd}^0}{\Ccoch{0}}{\Cocyc{1}}$, whose kernel is $\Cocyc{0} \cong \Cohom{0}$ and whose cokernel is $\Cocyc{1}/\Cobound{1} \cong \Cohom{1}$. In summary, we have:
\[
\xymatrix{
\Cohom{0}
\ar[r]^{\ker \tilde{\cobd}^0} &
\Ccoch{0}
\ar[r]^{\tilde{\cobd}^0} &
\Cocyc{1}
\ar[r]^{\coker \tilde{\cobd}^0} &
\Cohom{1}
}
\Mdot
\] 
We now lift the exact sequence of presheaves (\ref{eq:sequence}) considered above to the level of cochains. The map $\Ccoch{0} \longrightarrow \CcochRes{0}{U}$ is surjective due to flaccidity beneath the cover.
Putting this together with the previous observation, we obtain the diagram below:
\[
\xymatrix@C=2em{
\zero
\ar[r]
&
 \CcochRel{0}{U}
\ar[r]
\ar[d]_{\tilde{\cobd}^0}
&
 \Ccoch{0}
\ar[r]
\ar[d]_{\tilde{\cobd}^0}
&
 \CcochRes{0}{U}
\ar[r]
\ar[d]_{\tilde{\cobd}^0}
&
\zero
\\
\zero
\ar[r]
&
 \CocycRel{1}{U}
\ar[r]
&
 \Cocyc{1}
\ar[r]
&
 \CocycRes{1}{U}
&
}
\]
whose two rows are short exact sequences. The \emph{snake lemma} of homological algebra says that there exists a \emph{connecting homomorphism}
turning the kernels of the first row followed by the cokernels of the second into a long exact sequence, as shown by the following diagram.
\[
\xymatrix{
&&
\CohomRel{0}{U}
\ar[r]
\ar[d]
&
\Cohom{0}
\ar[r]
\ar[d]
&
\CohomRes{0}{U}
\ar[d]
\ar `r[rr]`[dd]`^d[llll]`[dddd] [ddddll]
&&
\\
&
\zero
\ar[r]
&
\CcochRel{0}{U}
\ar[r]
\ar[dd]
&
\Ccoch{0}
\ar[r]
\ar[dd]
&
\CcochRes{0}{U}
\ar[r]
\ar[dd]
&
\zero
&
\\
&&&&&&
\\
&
\zero
\ar[r]
&
\CocycRel{1}{U}
\ar[r]
\ar[d]
&
\Cocyc{1}
\ar[r]
\ar[d]
&
\CocycRes{1}{U}
\ar[d]
&
&
\\
&&
\CohomRel{1}{U}
\ar[r]
&
\Cohom{1}
\ar[r]
&
\CohomRes{1}{U}
&&
}
\]

We are interested in the case where $U$ is an element $C_0$ of the cover $\M$.
Then $\CohomRes{0}{C_0}$ is clearly isomorphic to $\FF(C_0)$, meaning that its elements are the local sections at $C_0$.
\begin{definition}\label{def:obs}
Let $C_0$ be an element of the cover $\M$ and $r_0 \in \FF(C_0)$.
Then, the \emph{cohomological obstruction} of $r_0$ is the element $\obst(r_0)$ of $\CohomRel{1}{C_0}$,
where $\fdec{\gamma}{\CohomRes{0}{U}}{\CohomRel{1}{U}}$ %
is the connecting homomorphism.
\end{definition}
The following proposition justifies regarding these as obstructions.
\begin{proposition}\label{prop:obstruction-compatiblefamily}
Let the cover $\M$ be connected, $C_0 \in \M$,
and $r_0 \in \FF(C_0)$.
Then, $\obst(r_0) = 0$ if and only if there is a compatible family $\enset{r_C \in \FF(C)}_{C \in \M}$ such that $r_{C_0} = r_0$.
\end{proposition}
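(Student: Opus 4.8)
The proposition characterizes when the cohomological obstruction $\gamma(r_0) \in \CohomRel{1}{C_0}$ vanishes, for a local section $r_0 \in \FF(C_0)$ at a context $C_0$ in a connected cover. The claim is that $\gamma(r_0) = 0$ iff $r_0$ extends to a compatible family. Let me think about how the connecting homomorphism works and how to prove both directions.

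**Setting up via the long exact sequence.**

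The element $\gamma(r_0)$ is defined using the snake lemma's connecting homomorphism $\gamma: \CohomRes{0}{C_0} \to \CohomRel{1}{C_0}$, where $\CohomRes{0}{C_0} \cong \FF(C_0)$. The long exact sequence gives exactness at $\CohomRes{0}{C_0}$:
$$\Cohom{0} \xrightarrow{p_*} \CohomRes{0}{C_0} \xrightarrow{\gamma} \CohomRel{1}{C_0}.$$
So $\gamma(r_0) = 0$ iff $r_0 \in \img(p_*)$, i.e., $r_0$ is the image of some class in $\Cohom{0}$ under the map induced by $p: \FF \to \FF|_{C_0}$.

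**What $\Cohom{0}$ and the map $p_*$ are.**

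Now $\Cohom{0} \cong \Cocyc{0}$, and by the discussion just before the proposition (for connected covers), a $0$-cocycle is exactly a compatible family $\{r_C \in \FF(C)\}_{C \in \M}$ (the cocycle condition $r_C|_{C \cap C'} = r_{C'}|_{C \cap C'}$ is compatibility, and the empty-context subtlety is handled by connectedness as shown). The map $p_*: \Cohom{0} \to \CohomRes{0}{C_0} \cong \FF(C_0)$ sends a compatible family $\{r_C\}$ to its component $r_{C_0}$ at $C_0$.

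**Tracing the two directions.**

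*Direction ($\Leftarrow$):* If $r_0$ extends to a compatible family $\{r_C\}$ with $r_{C_0} = r_0$, then this family is a $0$-cocycle, giving a class in $\Cohom{0}$ that maps to $r_0$ under $p_*$. By exactness, $\gamma(r_0) = 0$.

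*Direction ($\Rightarrow$):* If $\gamma(r_0) = 0$, exactness says $r_0 = p_*(\xi)$ for some $\xi \in \Cohom{0}$. A representative of $\xi$ is a compatible family $\{r_C\}$ whose $C_0$-component restricts correctly to $r_0$ in $\FF|_{C_0}(C_0) = \FF(C_0)$; since this identification is the identity on $\FF(C_0)$, we get $r_{C_0} = r_0$, giving the desired extension.

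**Plan of the proof.**

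The plan is to verify carefully that the two objects appearing in the exactness statement have the concrete meanings claimed. First I would confirm the identification $\CohomRes{0}{C_0} \cong \FF(C_0)$, tracing through that the induced map from $\Cohom{0}$ really is "take the $C_0$-component." Second I would invoke the characterization, established in the preceding paragraphs for connected covers, that $\Cocyc{0} = \Cohom{0}$ consists precisely of compatible families. The heart of the argument is then a one-line application of exactness of the long exact sequence at the node $\CohomRes{0}{C_0}$.

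\begin{proof}
By the snake lemma applied to the short exact sequences of cochain complexes, we have a long exact sequence, and in particular exactness at $\CohomRes{0}{C_0}$:
\[
\Cohom{0} \labarrow{p_*} \CohomRes{0}{C_0} \labarrow{\obst} \CohomRel{1}{C_0} \Mdot
\]
Thus $\obst(r_0) = 0$ if and only if $r_0$ lies in the image of $p_*$.

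We identify the relevant groups concretely. Since $\Cobound{0} = \zero$, we have $\Cohom{0} \cong \Cocyc{0}$, and as observed above, for a connected cover a $0$-cocycle is exactly a compatible family $\enset{r_C \in \FF(C)}_{C \in \M}$. Moreover, $\CohomRes{0}{C_0} \cong \FF(C_0)$, its elements being the local sections at $C_0$. Tracing through the maps of presheaves, the induced homomorphism $\fdec{p_*}{\Cohom{0}}{\CohomRes{0}{C_0} \cong \FF(C_0)}$ sends a compatible family $\enset{r_C}_{C \in \M}$ to its component $r_{C_0}$ at $C_0$.

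Now suppose $\obst(r_0) = 0$. By exactness, there is $\xi \in \Cohom{0}$ with $p_*(\xi) = r_0$. Taking a representative compatible family $\enset{r_C}_{C \in \M}$ of $\xi$, we have $r_{C_0} = p_*(\xi) = r_0$, so $r_0$ extends to a compatible family.

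Conversely, suppose $r_0$ extends to a compatible family $\enset{r_C}_{C \in \M}$ with $r_{C_0} = r_0$. This family is a $0$-cocycle, hence represents a class $\xi \in \Cohom{0}$ with $p_*(\xi) = r_{C_0} = r_0$. Thus $r_0 \in \img p_*$, and by exactness $\obst(r_0) = 0$.
\end{proof}

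**Anticipated main obstacle.**

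The only genuinely delicate point is the bookkeeping identifying $p_*$ with the "component at $C_0$" map, and confirming that the representative of a class in $\Cohom{0}$ really is a compatible family in the sense that includes $r_{C_0} = r_0$ on the nose (rather than only up to the restriction $\FF|_{C_0}(C_0) = \FF(C_0 \cap C_0) = \FF(C_0)$, which fortunately is the identity). Everything else is a formal consequence of exactness of the long exact sequence produced by the snake lemma, so once the dictionary between cohomology groups and compatible families is in place, the argument is immediate.
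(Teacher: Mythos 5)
Your proof is correct and follows essentially the same route as the paper's: both arguments reduce the statement to exactness of the long exact sequence at $\CohomRes{0}{C_0}$, identify $\Cohom{0}$ with compatible families via connectedness of the cover, and observe that the map into $\CohomRes{0}{C_0} \cong \FF(C_0)$ is projection onto the $C_0$-component. Your version merely spells out the two directions and the bookkeeping more explicitly than the paper does.
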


For a proof of this proposition, see \cite{abramsky2015contextuality}.

\begin{remark}
Note that our cohomology obstruction lives in the first cohomology group. In ordinary homology, the lower groups capture low-dimensional behaviour, and to capture higher-dimensional behaviour, one must pass to the higher homology groups. The situation is quite different in sheaf cohomology; there, it is standard that it is the first cohomology group which captures obstructions to extending local sections to global ones. Of course, it would also be of interest to find natural uses for the higher cohomology groups.
\end{remark}

\subsection{Witnessing contextuality}\label{ssec:cohom-witnesses}

We now apply these tools to analyse the possibilistic structure of empirical models.
The cohomological obstructions of Definition~\ref{def:obs} would appear to be ideally suited to the problem of identifying contextuality. 
The caveat is that, in order to apply those tools, it is necessary to work over a presheaf of abelian groups,
whereas we are concerned with $\Se$, which is merely a presheaf of sets.
We first consider how to build an abelian group from a set.
\begin{definition}
Given a ring $R$, we define a functor $\fdec{\FR}{\Set}{\RMod}$ to the category of $R$-modules (and thus, in particular, to the category of abelian groups).
For each set $X$, $\FR(X)$ is the set of functions $\fdec{\phi}{X}{R}$ of finite support.
Given a function $\fdec{f}{X}{Y}$,
\[ \fdecdef{\FR f}{\FR X}{\FR Y}{\phi}{\LambdaAbs{y}\sum_{f(x) = y} \phi(x)} \Mdot \]
\end{definition}
This
is easily seen to be functorial. We regard a function $\phi \in \FR(X)$ as a \emph{formal $R$-linear combination}
of elements of $X$: $\sum_{x \in X} \phi(x) \cdot x$.
There is a natural embedding $x \mapsto 1 \cdot x$ of $X$ into $\FR(X)$, which we shall use implicitly throughout.
In fact,
$\FR(X)$ is the \emph{free $R$-module generated by $X$};
and in particular, $\FZ(X)$ is the \emph{free abelian group generated by $X$}.

Given an empirical model $\Se$ defined on the measurement scenario $\tuple{X,\M,O}$,
we shall work with the (relative) \Cech\ cohomology for the abelian presheaf $\FR \Se$
for some ring $R$.
\begin{definition}
With each local section, $s \in \Se(C)$, in the support of an empirical model,
we associate the \emph{cohomological obstruction} $\gamma_{\FR \Se}(s)$.
\begin{itemize}
\item
If there exists some local section $s_0 \in \Se(C_0)$ such that $\gamma_{\FR \Se}(s_0) \neq 0$, we say that $\Se$ is \emph{cohomologically logically contextual}, or $\CLC_R(\Se)$. We also use the more specific notation $\CLC_R(\Se, s_0)$.
\item
If $\gamma_{\FR \Se}(s) \neq 0$ for all local sections, we say that $e$ is \emph{cohomologically strongly contextual}, or $\CSC_R$.
\end{itemize}
\end{definition}
The following
justifies considering cohomological obstructions as witnessing contextuality.
\begin{proposition}[{\hspace{1sp}\cite[Proposition 4.3]{abramsky:11a}}]
\label{prop:clcimpsc}
$\CLC_R$ implies $\LC$,
and
%
$\CSC_R$ implies $\SC$.
%
\end{proposition}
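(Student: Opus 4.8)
The plan is to derive both implications from Proposition~\ref{prop:obstruction-compatiblefamily} by exploiting the fact that the natural embedding $\Se \hookrightarrow \FR\Se$, $s \mapsto 1 \cdot s$, carries compatible families of the \emph{set}-valued presheaf $\Se$ to compatible families of the abelian presheaf $\FR\Se$. Throughout I assume, as the paper does, that the cover $\M$ is connected, so that Proposition~\ref{prop:obstruction-compatiblefamily} (applied to $\FF = \FR\Se$) is available; the general case reduces to this one by treating connected components separately.

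The one thing I need to check is that embedding lemma. Let $\{s_C \in \Se(C)\}_{C \in \M}$ be a compatible family in $\Se$. The restriction maps of $\FR\Se$ are, by construction, the images under $\FR$ of the restriction maps of $\Se$, and the embedding $x \mapsto 1 \cdot x$ is natural: applying $\FR$ to a restriction sends $1 \cdot s$ to $1 \cdot (s|_{U})$. Hence from $s_C|_{C \cap C'} = s_{C'}|_{C \cap C'}$ in $\Se$ we obtain $(1 \cdot s_C)|_{C \cap C'} = (1 \cdot s_{C'})|_{C \cap C'}$ in $\FR\Se$, so $\{1 \cdot s_C\}_{C \in \M}$ is a compatible family of $\FR\Se$. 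With this in hand I would establish the contrapositive of the core claim: if a local section $s \in \Se(C_0)$ lies in some compatible family of $\Se$, then that family embeds into a compatible family of $\FR\Se$ taking the value $1 \cdot s$ at $C_0$, whence Proposition~\ref{prop:obstruction-compatiblefamily} forces $\gamma_{\FR\Se}(s) = 0$. Equivalently, $\gamma_{\FR\Se}(s) \neq 0$ implies $\LC(\Se, s)$, i.e. that $s$ belongs to no compatible family of $\Se$.

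Both bullets then follow immediately. For the first, $\CLC_R(\Se)$ furnishes some $s_0$ with $\gamma_{\FR\Se}(s_0) \neq 0$, so $\LC(\Se, s_0)$ holds and hence $\LC(\Se)$. For the second, $\CSC_R(\Se)$ gives $\gamma_{\FR\Se}(s) \neq 0$ for every local section $s$, so $\LC(\Se, s)$ holds for all $s$, which is precisely the definition of $\SC(\Se)$. The implications run only one way because $\FR\Se$ admits genuinely more compatible families than $\Se$, namely the formal $R$-linear combinations of sections, so vanishing of the obstruction in $\FR\Se$ is strictly weaker than the existence of an honest compatible family in $\Se$. The only point demanding any care is the naturality of the embedding invoked in the lemma; once that is secured the argument is just a contrapositive reading of Proposition~\ref{prop:obstruction-compatiblefamily}, and I anticipate no serious obstacle.
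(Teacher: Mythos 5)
Your proposal is correct and follows essentially the same route as the paper: the contrapositive argument via the natural embedding $s \mapsto 1 \cdot s$ of $\Se$ into $\FR\Se$, carrying compatible families to compatible families, combined with Proposition~\ref{prop:obstruction-compatiblefamily}. The only difference is presentational --- you spell out the naturality of the embedding and the pointwise statement $\gamma_{\FR\Se}(s) \neq 0 \Rightarrow \LC(\Se,s)$, which the paper leaves implicit.
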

\begin{proof}  
Suppose an empirical model $e$ is not logically contextual. Then for every maximal context $C_0 \in \M$ and every $s_0 \in \Se(C_0)$,
there is a compatible family $\{ s_C \in \Se(C) \}_{C \in\M}$ with $s_{c_0} = s_0$.
As $\Se(C)$ embeds into $\FR \Se(C)$, $\{s_C\}$ is also a compatible family
in $\FR \Se$. Hence, by Proposition~\ref{prop:obstruction-compatiblefamily}, we conclude that $\obst(s) = 0$. The same argument can be applied to a single section witnessing the failure of strong contextuality.
\end{proof}

Thus we have a sufficient condition for contextuality in the existence of a cohomological obstruction.
Unfortunately, this condition is not, in general, necessary. It is possible that ``false positives'' arise in the form of families $\{ r_C \in \FR \Se(C) \}_{C\in\M}$ which are not
\textit{bona fide} global sections in $\Se(X)$ in which genuine global sections do not exist.

Several examples are discussed in detail in \cite{abramsky:11a}.
It is shown that cohomological obstructions over $\ZZ$ 
provide witnesses of strong contextuality for a number of well-studied models, including:
the GHZ model \cite{greenberger:90},
the Peres--Mermin ``magic'' square \cite{peres:90,mermin:93},
and the 18-vector Kochen--Specker model \cite{cabello:96}, the PR box \cite{popescu:94}, and the Specker triangle \cite{specker:60,liang:11}.
%
%
These results will be subsumed and greatly generalised in \S\ref{sec:cohomAvN}. 

The coefficients for cohomology can be taken from any commutative ring $R$.
Here is
how the cohomological obstructions obtained with different rings relate to each other:
 \begin{proposition}\label{prop:ringimp}
 Let $\fdec{h}{R'}{R}$ be a ring homomorphism. Then, for any $C \in \M$ and $s \in \Se(C)$,
 $\gamma_{\FRD \Se}(s) = 0$ implies
 $\gamma_{\FR \Se}(s) = 0$,
 and so
 $\CSC_R \implies \CSC_{R'}$ and $\CLC_{R} \implies \CLC_{R'}$.
 \end{proposition}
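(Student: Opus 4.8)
The plan is to exploit the functoriality of the obstruction $\gamma_{\FR \Se}(s)$ in the coefficient presheaf. A ring homomorphism $\fdec{h}{R'}{R}$ promotes to a natural transformation $\eta \colon \FRD \Rightarrow \FR$ between the two free-module functors: for each set $Y$, put $\fdecdef{\eta_Y}{\FRD(Y)}{\FR(Y)}{\phi}{h \circ \phi}$. This is well defined (as $h(0) = 0$ preserves finite support), a group homomorphism (as $h$ is additive), and natural in $Y$ (as $h$ commutes with the finite sums defining $\FRD f$ and $\FR f$). Whiskering with $\Se$ yields a morphism $\fdec{\eta_\Se}{\FRD \Se}{\FR \Se}$ of presheaves of abelian groups whose crucial feature, using that $h$ preserves the unit, is that it fixes the canonically embedded section $s$, since $\eta_\Se(1_{R'} \cdot s) = h(1_{R'}) \cdot s = 1_R \cdot s$.

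I would then feed $\eta_\Se$ through the construction of Definition~\ref{def:obs}. As $\eta$ is natural, $\eta_\Se$ commutes with all restriction maps, hence with the projection $p$ of the defining sequence \eqref{eq:sequence}; it therefore carries the kernel subpresheaf of $\FRD \Se$ into that of $\FR \Se$, giving a morphism between the two short exact sequences \eqref{eq:sequence} for $U = C_0$. Applying the \Cech\ cochain functor and the snake lemma, naturality of the connecting homomorphism produces a commuting square
\[
\xymatrix@C=3.5em{
\FRD\Se(C_0) \ar[r]^-{\gamma_{\FRD\Se}} \ar[d]_{\eta_\Se} & \check{H}^1(\M,(\FRD\Se)_{\bar{C_0}}) \ar[d]^{\eta_\Se} \\
\FR\Se(C_0) \ar[r]^-{\gamma_{\FR\Se}} & \check{H}^1(\M,(\FR\Se)_{\bar{C_0}})
}
\]
in which I have identified $\check{H}^0(\M, (-)|_{C_0})$ with $(-)(C_0)$ and written $\eta_\Se$ also for the induced map on relative cohomology.

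Chasing $s$ around this square settles the proposition. The left vertical map sends $s$ (represented by $1_{R'} \cdot s$) to $s$ (represented by $1_R \cdot s$), so commutativity gives $\gamma_{\FR \Se}(s)$ as the image of $\gamma_{\FRD \Se}(s)$ under the right vertical map; hence $\gamma_{\FRD \Se}(s) = 0$ forces $\gamma_{\FR \Se}(s) = 0$. Reading this contrapositively, $\gamma_{\FR \Se}(s) \neq 0$ implies $\gamma_{\FRD \Se}(s) \neq 0$ for every $s$, whence $\CSC_R \implies \CSC_{R'}$ (quantifying over all $s$) and $\CLC_R \implies \CLC_{R'}$ (taking a single witness $s$), directly from the definitions.

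The hard part is bookkeeping rather than conceptual: one must verify carefully that $\eta_\Se$ respects the kernel subpresheaf of \eqref{eq:sequence}, and then invoke the naturality of the connecting homomorphism of the long exact sequence with respect to morphisms of short exact sequences. This naturality is the standard fact that makes the snake lemma functorial, and it is exactly what upgrades the elementary identity $\eta_\Se(1_{R'} \cdot s) = 1_R \cdot s$ into the stated implication; the remaining checks --- well-definedness and naturality of $\eta$, and preservation of the generator --- are routine once the additivity and unitality of $h$ are used.
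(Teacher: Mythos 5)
Your proof is correct, and its engine is exactly the paper's: the natural transformation $\fdec{\eta}{\FRD}{\FR}$ given by postcomposition with $h$ (the paper calls it $F_h$, writing the proof with the names of $R$ and $R'$ swapped relative to the statement, a relabelling your version silently repairs), together with the observation that unitality of $h$ fixes the embedded generators, $\eta(1_{R'} \cdot s) = 1_R \cdot s$. Where you genuinely diverge is in how the vanishing of the obstruction is transferred. The paper does not re-enter the homological machinery at all: it invokes Proposition~\ref{prop:obstruction-compatiblefamily} to translate $\gamma_{\FRD\Se}(s) = 0$ into the existence of a compatible family $\enset{r_C \in \FRD\Se(C)}_{C \in \M}$ with $r_{C_0} = s$, pushes that family forward along $F_h$ (naturality makes the image compatible, unitality keeps the value at $C_0$ equal to $s$), and translates back. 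You instead rerun the construction of Definition~\ref{def:obs}: verify that $\eta_\Se$ preserves the kernel subpresheaf of \eqref{eq:sequence}, obtain a morphism of the two snake-lemma diagrams, and use naturality of the connecting homomorphism to get the commuting square, \ie the identity $\gamma_{\FR\Se}(s) = \eta_{*}\,\gamma_{\FRD\Se}(s)$. Both routes are sound. The paper's is shorter because Proposition~\ref{prop:obstruction-compatiblefamily} already packages the only consequence of the long exact sequence that is needed, and it sidesteps the checks you rightly flag (kernel preservation, functoriality of the snake lemma, and the naturality in the coefficient presheaf of the identification of $\CohomRes{0}{C_0}$ with the sections at $C_0$, which you use tacitly when labelling the left vertical map of your square). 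Your route costs that bookkeeping but proves something strictly stronger and reusable: the obstruction itself is natural in the coefficient ring, not merely its vanishing. The final step --- reading the implication contrapositively and quantifying over all sections for $\CSC_R \implies \CSC_{R'}$, or over a single witness for $\CLC_R \implies \CLC_{R'}$ --- is the same in both.
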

 \begin{proof} 
 If $\fdec{h}{R}{R'}$ is a ring homomorphism,
then for any set $X$ there is a map 
\[\fdecdef{F_h}{\FR X}{\FRD X}{r}{h \circ r} \Mdot \]
which is a group homomorphism. Moreover, this assignment is natural in $X$.
Hence, this determines a presheaf map $\FR \Se \longrightarrow \FRD \Se$,
and compatible families on the former are mapped to compatible families on the latter.
Since the map $F_h$ above leaves the elements of the generating set fixed, the condition that the family agrees with $s_0$ at context $C_0$ is preserved. 
%
%
\end{proof}

We conclude this section with a remark.
If $\{ r_C \in \FR \Se(C)\}_{C \in \M}$ is a compatible family, then the sum of the coefficients of the formal linear combinations $r_C$ is the same for all $C$.
This holds because $\Se(\vn) = \EventShf(\vn) = \{\star\}$; so that
for any $C \in \M$, we have
\[
r_C|_\vn (\star) = \sum_{s \in \Se(C)} r_C(s) \Msemicolon
\]
\ie compatibility forces all these restrictions to the empty context to be the same.
Therefore, when the obstruction of a section $s_0 \in \Se(C_0)$
(more precisely, of the linear combination $1 \cdot s_0 \in \FR \Se(C_0)$) vanishes,
the corresponding family of linear combinations $\{ r_C \in \FR \Se(C) \}_{C\in\M}$ must  in fact contain only  \emph{affine} combinations---%
those whose coefficients sum to one.

\section{Cohomology and $\AvN$ arguments}\label{sec:cohomAvN}
The aim of this section is to show that if an empirical model is $\AvN_R$,
then the cohomological obstructions witness its strong contextuality.
Moreover,
it is enough to consider cohomology with coefficients in the ring $R$ itself. 

The result is stated as follows.
\begin{theorem}
\label{thm:main}
  Let $\Se$ be an empirical model on $\tuple{X,\M,R}$. Then:
\[ \AvN_R(\Se) \;\implies\; \SC(\Aff \Se) \;\implies\; \CSC_R(\Se) \;\implies\; \CSC_\ZZ(\Se) \;\implies\; \SC(\Se) 
\Mdot\]  
\end{theorem}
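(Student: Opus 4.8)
The plan is to establish the chain one link at a time, observing that three of the four implications follow immediately from results already in hand; essentially all of the work then sits in the single link $\SC(\Aff\Se) \implies \CSC_R(\Se)$. Throughout I would assume $\M$ connected, reducing to connected components if necessary, so that the propositions on obstructions apply.

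For the easy links: the first implication $\AvN_R(\Se) \implies \SC(\Aff\Se)$ is exactly Proposition~\ref{prop:AvN->SCAff}, so nothing remains there. The third implication $\CSC_R(\Se) \implies \CSC_\ZZ(\Se)$ is an instance of Proposition~\ref{prop:ringimp}: there is a (unique) ring homomorphism $\ZZ \to R$, and taking $R' = \ZZ$ in that proposition yields precisely $\CSC_R \implies \CSC_\ZZ$. The fourth implication $\CSC_\ZZ(\Se) \implies \SC(\Se)$ is the second clause of Proposition~\ref{prop:clcimpsc} with $R = \ZZ$.

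The heart of the matter is the second implication $\SC(\Aff\Se) \implies \CSC_R(\Se)$, which I would prove by contraposition: assuming $\CSC_R(\Se)$ fails, I would construct a global section of $\Aff\Se$, contradicting $\SC(\Aff\Se)$. If $\CSC_R(\Se)$ fails there is a context $C_0 \in \M$ and $s_0 \in \Se(C_0)$ with vanishing obstruction $\gamma_{\FR\Se}(s_0) = 0$. By Proposition~\ref{prop:obstruction-compatiblefamily} this gives a compatible family $\{r_C \in \FR\Se(C)\}_{C \in \M}$ with $r_{C_0} = 1 \cdot s_0$, and by the concluding remark of \S\ref{sec:cohom} every $r_C$ is an \emph{affine} formal combination, i.e.\ its coefficients sum to $1$. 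The key construction is the evaluation (or realization) map $\ev_C \colon \FR\Se(C) \to \EE(C) = R^C$, the unique $R$-linear extension of the inclusion $\Se(C) \hookrightarrow R^C$, sending a formal combination $\sum_i c_i\, s_i$ to the genuine pointwise combination $\sum_i c_i s_i \in R^C$.

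I would then verify the two properties that make this work. First, $\ev$ is a presheaf morphism $\FR\Se \to \EE$ commuting with restriction; this is a direct coordinatewise computation using the definition of $\FR$ on restriction maps, and is where the only real (though routine) calculation lies. Second, an affine combination evaluates into $\aff(\Se(C)) = (\Aff\Se)(C)$. Combining these, the family $\{\ev_C(r_C)\}_{C \in \M}$ consists of sections of $\Aff\Se$ that agree on overlaps (since $\ev$ commutes with restriction and $\{r_C\}$ is compatible), hence is a compatible family for $\Aff\Se$; as $\Aff\Se$ is an empirical model it satisfies property~\eqref{def:empirical.model.3}, so this family induces a global section, giving $(\Aff\Se)(X) \neq \vn$ and contradicting $\SC(\Aff\Se)$. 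The main obstacle is thus not conceptual but bookkeeping: ensuring $\ev$ is a well-defined natural transformation, and using affineness (supplied by the \S\ref{sec:cohom} remark) to place the realized family inside the affine closure rather than merely its linear span.
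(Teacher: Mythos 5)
Your proposal is correct and follows essentially the same route as the paper: the three easy links are discharged by Propositions~\ref{prop:AvN->SCAff}, \ref{prop:ringimp} and \ref{prop:clcimpsc} exactly as in the text, and the middle link is proved by the same contrapositive argument, using Proposition~\ref{prop:obstruction-compatiblefamily} plus the affineness remark at the end of \S\ref{ssec:cohom-witnesses}. Your hands-on evaluation map $\ev_C$ is precisely the paper's counit-of-adjunction presheaf map $\FR^{\mathsf{aff}} U \Se \longrightarrow \Aff\Se$ in concrete coordinates, so the two proofs differ only in presentation, not in substance.
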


The first of the implications was already established in Proposition~\ref{prop:AvN->SCAff}, the third in Proposition~\ref{prop:ringimp},
and the fourth in Proposition~\ref{prop:clcimpsc}.

In order to prove the second, we use the properties of the functor
$\fdec{\FR}{\Set}{\RMod}$ that
constructs the $R$-module of formal $R$-linear combinations of elements of a set $X$.
As already mentioned,  $\FR X$ is the free $R$-module generated by $X$.
This means that it is the left adjoint of the forgetful functor $\fdec{U}{\RMod}{\Set}$.
\[
\xymatrix@!C=.8cm{
\Set \ar@/^1pc/[rr]^{\FR} &\perp& \RMod \ar@/^1pc/^{U}[ll]
}
\]
The unit $\eta$ of this adjunction is the obvious embedding, which we have been using,
taking an element $x \in X$ to the formal linear combination $1 \cdot x$.
The counit is the natural transformation $\natfdec{\epsilon}{\FR \circ U}{Id_{\RMod}}$
given, for each $R$-module $M$, by the evaluation map
\[\fdecdef{\epsilon_M}{\FR U (M)}{M}{r}{\sum_{x \in M} r(x) x} \Mdot\]

We are interested in taking formal linear combinations of subsets of elements.
Let us fix a module $M$ and a subset $S \subseteq U(M)$.
Then the map $\epsilon_M$, by virtue of being an $R$-module homomorphism,
maps the formal linear combinations of elements of $S$, $\FR(S)$,
which coincide with the linear span in $\FR U (M)$ of $\eta[S] = \setdef{1 \cdot s}{s \in S}$,
to the linear span of $S$ in $M$, $\lspanU{M} S$.
Moreover, it maps the formal affine combinations $\FR^{\mathsf{aff}} (S) = \affU{\FR U (M)} \eta[S]$
to the affine closure $\affU{M} S$.

Recall that we are dealing with measurement scenarios whose outcomes are identified with a ring $R$,
hence where $\EventShf(U)$ are themselves $R$-modules, \ie $\fdec{\EE}{\ps(X)^\op}{\RMod}$.
As such, the counit can be horizontally composed to yield a natural transformation, or map of presheaves,
$\fdec{\mathsf{id}_{\EE} \ast \epsilon}{F_R \circ U \circ \EE}{\EE}$, given at each context $U \subseteq X$ 
by $\fdec{\epsilon_{\EE(U)}}{\FR U \EE(U)}{\EE(U)}$.
Now, given an empirical model $\Se$,
we can apply  the observation regarding subsets of the module at each context.
But, since $\affU{\EE(U)} \Se(U) = \SAffe(U)$ by definition for $U$ beneath the cover, and since containment still holds above it,
we conclude that the presheaf map restricts as follows:
\[
\xymatrix{
  \FR^{\mathsf{aff}} U \Se \ar[d] \ar@{ >->}[r] & \FR U \Se \ar[d] \ar@{ >->}[r] & \FR U \EventShf \ar[d]^\epsilon 
\\
  \Aff \Se \ar@{ >->}[r] & \LSpan \Se \ar@{ >->}[r] & \EventShf 
}
\]
%
%
\begin{proof}[Proof of Theorem~\ref{thm:main}]
We show the contrapositive.
Suppose that $\Se$ is not $\CSC_R$, \ie that $\obst_{\FR \Se}(s_0) = 0$ for some $s_0 \in \Se(C_0)$.
Then, by Proposition~\ref{prop:obstruction-compatiblefamily},
this is equivalent to the existence
of a compatible family $\enset{r_C \in \FR \Se(C)}_{C \in \M}$ with $r_{C_0} = s_0$.
As observed at the end of \S\ref{ssec:cohom-witnesses},
all these $r_C$ must be formal affine combinations of elements in $\Se(C)$.
But then the presheaf map $\FR^{\mathsf{aff}} U \Se  \longrightarrow \Aff\Se$ above
pushes this compatible family to a compatible family of $\Aff\Se$,
implying that the model $\Aff \Se$ is not strongly contextual.
\end{proof}

Essentially the same strategy can be used to prove an analogous result for logical contextuality.
The notion of inconsistent theory has to be adapted: instead of asking
whether there is a global assignment satisfying all the equations in the theory,
we can ask, given a partial assignment $s_0 \in \EE(C_0)$ whether there is such a global assignment
with the additional requirement that it restricts to $s_0$.
This can be seen as a generalisation of the notion of \emph{robust constraint satisfaction} studied in  \cite{abramsky:13b} from the complexity perspective.
We write $\AvN_R(e,s_0)$ if the theory of $\Se$ has no solution extending $s_0$. Then we have:
\[ \AvN_R(e,s_0) \;\implies\; \LC(\Aff \Se,s_0) \;\implies\; \CLC_R(\Se, s_0) 
\;\implies\; \CLC_\ZZ(\Se,s_0) \;\implies\; \LC(\Se, s_0)
\Mdot\]  

Our results show that, where there is an cohomological obstruction, it witnesses genuine contextuality. On the other hand, the important class of $\AvN$ examples are all captured by cohomology. It is worth emphasising that all known quantum examples of strong contextuality are of $\AvN$ type, hence all such examples are captured by cohomology.

\subsubsection*{Discussion}
We have shown that for a large class of models, their logical or strong contextuality is witnessed by cohomology.
This subsumes and greatly generalises the results in \cite{abramsky:11a}.
Moreover, these models include a large class of concrete constructions arising from stabiliser quantum mechanics, going well beyond existing results of this kind in the quantum information literature.
It remains an objective for future work to achieve a precise characterisation of what cohomology detects, and more generally full equivalences between the various ways of expressing contextuality.
Note that, as already mentioned, the first implication in Theorem~\ref{thm:main} can be reversed under the assumption that $R$ is a field.
If we use a more abstract notion of equational consistency, in terms of quotient modules rather than equations expressed in a  ``coordinatized'' form, then it can be reversed even for general rings. The point of taking the ground ring to be a field is exactly that it allows coordinatization.

We also remark that the cohomological methods we have developed can be applied to an elaborated version of the treatment of logical paradoxes we gave in \S\ref{sec:manyfaces}, following the lines of \cite{cook:04,walicki:09}. We aim to give a detailed treatment of this ``cohomology of paradox'' in future work.
We also note the intriguing resemblances, on the conceptual level at least, to the work of Roger Penrose in \cite{penrose1992cohomology}.
\paragraph*{Acknowledgements}
We thank Alexandru Baltag, Louis Narens and Nicholas Teh for useful discussions.
Support from the following is gratefully acknowledged: Templeton World Charity Foundation, AFOSR, EPSRC, the Oxford Martin School, and FCT -- Funda\c{c}\~ao para a Ci\^encia e Tecnologia
(Portuguese Foundation for Science and Technology), 
PhD grant SFRH/BD/94945/2013.
\bibliographystyle{abbrv} 
{\tiny
\bibliography{ccprefs}

\begin{thebibliography}{10}

\bibitem{abramsky:12c}
S.~Abramsky.
\newblock Relational databases and {B}ell's theorem.
\newblock In V.~Tannen, L.~Wong, L.~Libkin, W.~Fan, W.-C. Tan, and M.~Fourman,
  editors, {\em In search of elegance in the theory and practice of
  computation}, volume 8000 of {\em LNCS}, pages 13--35. Springer, 2013.

\bibitem{abramsky2015contextuality}
S.~Abramsky, R.~S. Barbosa, K.~Kishida, R.~Lal, and S.~Mansfield.
\newblock Contextuality, cohomology and paradox.
\newblock arXiv preprint arXiv:1502.03097, 2015.

\bibitem{abramsky:11}
S.~Abramsky and A.~Brandenburger.
\newblock The sheaf-theoretic structure of non-locality and contextuality.
\newblock {\em New J.\ Phys.}, 13(11):113036, 2011.

\bibitem{abramsky:13b}
S.~Abramsky, G.~Gottlob, and P.~G. Kolaitis.
\newblock Robust constraint satisfaction and local hidden variables in quantum
  mechanics.
\newblock In F.~Rossi, editor, {\em Proceedings of the Twenty-Third IJCAI},
  pages 440--446. AAAI Press, 2013.

\bibitem{abramsky:11a}
S.~Abramsky, S.~Mansfield, and R.~Soares~Barbosa.
\newblock The cohomology of non-locality and contextuality.
\newblock In B.~Jacobs, P.~Selinger, and B.~Spitters, editors, {\em Proc. 8th
  International Workshop on Quantum Physics and Logic 2011}, volume~95 of {\em
  EPTCS}, pages 1--14, 2012.

\bibitem{bell:64}
J.~S. Bell.
\newblock On the {E}instein-{P}odolsky-{R}osen paradox.
\newblock {\em Physics}, 1(3):195--200, 1964.

\bibitem{cabello:96}
A.~Cabello, J.~M. Estebaranz, and G.~Garc{\'\i}a-Alcaine.
\newblock {B}ell-{K}ochen-{S}pecker theorem.
\newblock {\em Phys.\ Lett.\ A}, 212(4):183--187, 1996.

\bibitem{cabello2010non}
A.~Cabello, S.~Severini, and A.~Winter.
\newblock {(Non-)Contextuality of Physical Theories as an Axiom}.
\newblock {\em arXiv:1010.2163}, 2010.

\bibitem{Caves06}
C.~Caves.
\newblock Stabilizer formalism for qubits.
\newblock Available at \\
  \texttt{info.phys.unm.edu/$\sim$caves/reports/stabilizer.ps}, 2006.

\bibitem{cook:04}
R.~T. Cook.
\newblock Patterns of paradox.
\newblock {\em The Journal of Symbolic Logic}, 69(03):767--774, 2004.

\bibitem{fagin:82}
R.~Fagin, A.~O. Mendelzon, and J.~D. Ullman.
\newblock A simplified universal relation assumption and its properties.
\newblock {\em ACM Transactions on Database Systems (TODS)}, 7(3):343--360,
  1982.

\bibitem{greenberger:90}
D.~M. Greenberger, M.~A. Horne, A.~Shimony, and A.~Zeilinger.
\newblock {B}ell's theorem without inequalities.
\newblock {\em Am.\ J.\ Phys.}, 58(12):1131--1143, 1990.

\bibitem{hardy:93}
L.~Hardy.
\newblock Nonlocality for two particles without inequalities for almost all
  entangled states.
\newblock {\em Phys.\ Rev.\ Lett.}, 71(11):1665--1668, 1993.

\bibitem{howard:14}
M.~Howard, J.~Wallman, V.~Veitch, and J.~Emerson.
\newblock Contextuality supplies the {`}magic{'} for quantum computation.
\newblock {\em Nature}, 510(7505):351--355, 06 2014.

\bibitem{lal:11}
R.~Lal.
\newblock A sheaf-theoretic approach to cluster states, 2011.
\newblock Private communication.

\bibitem{liang:11}
Y.-C. Liang, R.~W. Spekkens, and H.~M. Wiseman.
\newblock {S}pecker's parable of the overprotective seer.
\newblock {\em Phys.\ Rep.}, 506(1):1--39, 2011.

\bibitem{maclane:92}
S.~Mac~Lane and I.~Moerdijk.
\newblock {\em Sheaves in geometry and logic}.
\newblock Springer, 1992.

\bibitem{maier:84}
D.~Maier, J.~D. Ullman, and M.~Y. Vardi.
\newblock On the foundations of the universal relation model.
\newblock {\em ACM Transactions on Database Systems (TODS)}, 9(2):283--308,
  1984.

\bibitem{mansfield:14}
S.~Mansfield.
\newblock Completeness of {H}ardy non-locality: Consequences \& applications.
\newblock In {\em Informal Proceedings of 11th International Workshop on
  Quantum Physics \& Logic}, 2014.

\bibitem{mermin:90b}
N.~D. Mermin.
\newblock Extreme quantum entanglement in a superposition of macroscopically
  distinct states.
\newblock {\em Phys.\ Rev.\ Lett.}, 65(15):1838--1840, 1990.

\bibitem{mermin:90}
N.~D. Mermin.
\newblock Simple unified form for the major no-hidden-variables theorems.
\newblock {\em Phys.\ Rev.\ Lett.}, 65(27):3373--3376, 1990.

\bibitem{mermin:93}
N.~D. Mermin.
\newblock Hidden variables and the two theorems of {J}ohn {B}ell.
\newblock {\em Rev.\ Mod.\ Phys.}, 65(3):803--815, 1993.

\bibitem{nielsen:00}
M.~Nielsen and I.~Chuang.
\newblock {\em Quantum computation and quantum information}.
\newblock Cambridge University Press, 2000.

\bibitem{penrose1992cohomology}
R.~Penrose.
\newblock On the cohomology of impossible figures.
\newblock {\em Leonardo}, 25(3/4):245--247, 1992.

\bibitem{peres:90}
A.~Peres.
\newblock Incompatible results of quantum measurements.
\newblock {\em Phys.\ Lett.\ A}, 151(3-4):107--108, 1990.

\bibitem{pironio:11}
S.~Pironio, J.-D. Bancal, and V.~Scarani.
\newblock Extremal correlations of the tripartite no-signaling polytope.
\newblock {\em J.\ Phys.\ A--Math.\ Theor.}, 44(6):065303, 2011.

\bibitem{popescu:94}
S.~Popescu and D.~Rohrlich.
\newblock Quantum nonlocality as an axiom.
\newblock {\em Found.\ Phys.}, 24(3):379--385, 1994.

\bibitem{raussendorf:01}
R.~Raussendorf and H.~J. Briegel.
\newblock A one-way quantum computer.
\newblock {\em Phys.\ Rev.\ Lett.}, 86(22):5188, 2001.

\bibitem{specker:60}
E.~Specker.
\newblock Die {L}ogik nicht gleichzeitig entscheidbarer {A}ussagen.
\newblock {\em Dialectica}, 14:239--246, 1960.

\bibitem{walicki:09}
M.~Walicki.
\newblock Reference, paradoxes and truth.
\newblock {\em Synthese}, 171(1):195--226, 2009.

\bibitem{wen:01}
L.~Wen.
\newblock Semantic paradoxes as equations.
\newblock {\em Math.\ Intell.}, 23(1):43--48, 2001.

\bibitem{zhang:13}
X.~Zhang, M.~Um, J.~Zhang, S.~An, Y.~Wang, D.-l. Deng, C.~Shen, L.-M. Duan, and
  K.~Kim.
\newblock State-independent experimental test of quantum contextuality with a
  single trapped ion.
\newblock {\em Phys.\ Rev.\ Lett.}, 110(7):070401, 2013.

\bibitem{zu:12}
C.~Zu, Y.-X. Wang, D.-L. Deng, X.-Y. Chang, K.~Liu, P.-Y. Hou, H.-X. Yang, and
  L.-M. Duan.
\newblock State-independent experimental test of quantum contextuality in an
  indivisible system.
\newblock {\em Phys.\ Rev.\ Lett.}, 109(15):150401, 2012.

\end{thebibliography}
}

\end{document}